\algrenewcommand\algorithmicrequire{\textbf{Input:}}
\algrenewcommand\algorithmicensure{\textbf{Output:}}
\algnewcommand\algorithmicforeach{\textbf{for each}}
\newcommand{\eps}{\varepsilon}
\newcommand{\OPT}{S^*}
\newcommand{\MG}{\mathrm{\textsc{MG}}}
\newcommand{\AG}{\mathrm{\textsc{AG}}}
\newcommand{\N}{N}
\newcommand{\capacity}{B}
\colorlet{mycolor1}{violet!80!black}
\colorlet{mycolor2}{orange!70!black}
\colorlet{mycolor3}{green!70!black}
\theoremstyle{plain}
\newtheorem{theorem}{Theorem}[section]
\newtheorem{lemma}[theorem]{Lemma}
\newtheorem{proposition}[theorem]{Proposition}
\newtheorem*{theorem*}{Claim}
\theoremstyle{definition}
\newtheorem{definition}[theorem]{Definition}
\begin{document}

\title[Maximizing a Submodular Function under an Unknown Knapsack Constraint]{Maximizing a Submodular Function with Bounded Curvature under an Unknown Knapsack Constraint} 

\author{Max Klimm}
\author{Martin Knaack}

\keywords{}

\begin{abstract}
	This paper studies the problem of maximizing a monotone submodular function under an unknown knapsack constraint. A solution to this problem is a policy that decides which item to pack next based on the past packing history. The robustness factor of a policy is the worst case ratio of the solution obtained by following the policy and an optimal solution that knows the knapsack capacity. We develop a policy with a robustness factor that is decreasing in the curvature $c$ of the submodular function. For the extreme cases $c=0$ corresponding to an additive objective function, it matches a previously known and best possible robustness factor of $1/2$. For the other extreme case of $c=1$ it yields a robustness factor of~$\approx 0.35$ improving over the best previously known robustness factor of~$\approx 0.06$. 

    The analysis of our policy relies on a greedy algorithm that is a slight modification of Wolsey's greedy algorithm for the submodular knapsack problem with a known knapsack constraint. We obtain tight approximation guarantees for both of these algorithms in the setting of a submodular objective function with curvature $c$.
\end{abstract}

\maketitle

\newpage
 
\section{Introduction}
\label{sec:introduction}

This paper is concerned with the problem
\begin{align}
    \label{eq:problem}
    \text{maximize } \Biggl\{ f(S) \;\Bigg\vert\; S \subseteq N \text{ and } \sum_{i \in S} s(i) \leq \capacity \Biggr\}
\end{align}
of maximizing a submodular, monotone, and normalized function $f \colon 2^N \to \mathbb{R}_{\geq 0}$ under a knapsack constraint, where $N$ is a finite set of items, $s(i) \in \mathbb{R}_{>0}$ is the size of item~$i \in N$, and $\capacity \in \mathbb{R}_{>0}$ is a knapsack capacity. 
This optimization problem is an important abstraction of many problems that appear in various applications, such as facility location (Cornu\'{e}jols et al.~\cite{OTHER:cornuejols77}), sensor placement (Krause and Guestrin~\cite{DBLP:journals/tist/KrauseG11}, Krause et al.~\cite{DBLP:journals/jmlr/KrauseSG08}), marketing in social networks (Kempe et al.~\cite{DBLP:journals/toc/KempeKT15}), and maximum entropy sampling (Lee~\cite{DBLP:journals/ior/Lee98a}). 

For the special case of a cardinality constraint where $s(i) = 1$ for all $i \in N$, a straightforward greedy algorithm by Nemhauser et al.~\cite{DBLP:journals/mor/NemhauserW78} computes a solution with an approximation guarantee of $1 - 1/e$ and this ratio is best possible for any polynomial algorithm unless $\mathsf{P} = \mathsf{NP}$ (Feige~\cite{DBLP:journals/jacm/Feige98}). For the case of a general knapsack constraint, combining the greedy algorithm with a partial enumeration of all subsolutions with at most three items yields the same approximation guarantee (Sviridenko~\cite{DBLP:journals/orl/Sviridenko04}).

While these results are tight, the algorithms often perform much better than their theoretical guarantees. In order to explain and quantify this phenomenon, Conforti and Cornu\'{e}jols~\cite{DBLP:journals/dam/ConfortiC84} introduce the concept of the \emph{curvature} of a submodular function. Recall that a function $f \colon 2^N \to \mathbb{R}_{\geq 0}$ is submodular if the marginal increase $f(S \cup \{u\}) - f(S)$ of an element $u \in N \setminus S$ is non-increasing as $S$ increases. The curvature~$c \in [0,1]$ measures how much this marginal increase of an item~$u$ varies when varying $S$ and is defined as
\begin{equation*}
    c = 1 - \min_{j \in \N} \frac{f(\N) - f(\N \setminus \{j\})}{f(\{j\})},
\end{equation*}
where we further used that $f$ is normalized, i.e., $f(\emptyset) = 0$. It is easy to see that $c = 0$ if and only if the function is additive. The other extreme case $c = 1$ is, e.g., attained when $f$ is the rank function of a matroid. Conforti and Cornu\'{e}jols~\cite{DBLP:journals/dam/ConfortiC84} show that the greedy algorithm for the cardinality constraint case has an improved approximation guarantee of $(1-e^{-c})/c$. A more sophisticated algorithm for the same problem by Sviridenko et al.~\cite{DBLP:journals/mor/SviridenkoVW17} achieves an even better approximation guarantee of $1 - c/e - \eps$ for any $\eps > 0$. 

In all of the results above it is assumed that all data of the problem~\eqref{eq:problem} is given completely. In this paper, we consider a variant of the problem where the set of items $N$, their sizes $s(i) \in \mathbb{R}_{>0}$, and the function $f \colon 2^N \to \mathbb{R}_{\geq 0}$ are known, but the knapsack capacity $\capacity \in \mathbb{R}_{>0}$ is unknown. In this context, a solution to the problem is a policy that decides which item to pack next, based on the previous packing history. More formally, a policy $\Pi$ is a binary decision tree where nodes correspond to items with the property that no item appears more than once on a path from the root to a leaf. The item at the root of the tree is the item that is attempted to be packed first. If it fits, it is irrevocably included in the solution, the (unknown) capacity is reduced by the size of the item, and the solution proceeds with the left subtree of the decision tree. If the item does not fit, it is discarded, the (unknown) capacity stays the same, and the solution proceeds with the right subtree. This process stops after a leaf is reached.
The assumption that the policy can resume packing smaller items after a larger item does not fit is suitable when the knapsack capacity is interpreted as a monetary budget.
Generally speaking, such packing policies are desirable when packing problems of this kind have to be solved repeatedly for varying knapsack capacities. For illustration, consider the marketing problem in social networks. By analyzing the social network, a packing policy can be constructed that can then be used in order to run marketing campaigns \emph{for all possible budgets}, without the need to rerun any optimization. In a similar vein, consider the problem of maximum entropy sampling. The Shannon entropy of a set of (dependent) random variables is a submodular function of the (index set) of the variables. Suppose that observing the realization of a random variable comes at a cost (for market research, for evaluating the data, etc.). With our algorithm, one can compute a policy that \emph{for all budgets} allows to retrieve close to optimal information without any need to rerun the optimization for different budgets. 

In the examples above, we clearly want to obtain solutions that are good for any possible capacity. We evaluate the quality of a policy in terms of its \emph{robustness factor}.
Fix an instance of \eqref{eq:problem}, and a corresponding policy $\Pi$.
For a capacity $\capacity \in \mathbb{R}_{>0}$, let $\Pi(\capacity)$ be the set of items packed by the policy when the knapsack capacity is $\capacity$, and let $\OPT(\capacity)$ be the items included in an optimal solution for capacity~$\capacity$. The robustness factor is defined as $\alpha = \inf_\capacity f(\Pi(\capacity)) / f(\OPT(\capacity))$. A policy with robustness factor of $\alpha \in [0,1]$ is called $\alpha$-optimal.

\subsection{Our Results and Techniques}

For the case that $f$ is additive (corresponding to the case that $c = 0$), Disser et al.~\cite{DBLP:journals/siamdm/DisserKMS17} show that every instance admits a $1/2$-optimal policy, and that the factor of $1/2$ is best possible.
Kawase et al.~\cite{DBLP:journals/siamdm/KawaseSF19} consider the fully submodular case corresponding to the case $c=1$. They provide a deterministic policy with robustness factor $2(1-1/e)/21 \approx 0.06$ and a randomized policy with robustness factor of $(1- 1/e)/2 \approx 0.32$.

We provide a deterministic polynomial algorithm that constructs a deterministic policy $\Pi$ with a robustness factor of
\begin{align}
    \alpha = \frac{1}{c} \bigl(1- e^{-cx}\bigr), \label{eq:results}
\end{align}
where $x$ is the unique root of the equation
$\smash{\frac{1}{c} \bigl(1- e^{-c z} \bigr) = \frac{1-z}{1+c(1-z)}}$.

For the most general case of a submodular function with curvature $c=1$, this yields a robustness factor of $\approx 0.35$ which improves over the factor of $\approx 0.06$ by Kawase et al.; for smaller values of $c < 1$ the robustness factor increases and retains the optimal factor of $\alpha =1/2$ for the additive case when $c=0$. For an illustration; see Figure~\ref{fig:robustness-factor}.

\begin{figure}
    \begin{center}
        \begin{tikzpicture}[scale=3.5]
            \draw[fill=lightgray,draw=none] (0,0.5) -- (1,0.5) -- (1,1) -- (0,1) -- cycle; 
            \node (imp) at (0.5,0.6) {impossible~\cite{DBLP:journals/siamdm/DisserKMS17}}; 
            \draw[-latex,very thick] (-0.1,0) to (1.1,0) node[right] {$c$};
            \draw[-latex,very thick] (0,-0.1) to (0,1.1) node[above] {$\alpha$};
            \draw[thick] plot[smooth] coordinates {(0,0.5) (0.1, 0.4818) (0.2, 0.4647) (0.3, 0.4485) (0.4, 0.4333) (0.5, 0.4189) (0.6, 0.4053) (0.7, 0.3925) (0.8, 0.3803) (0.9, 0.3687) (1, 0.3578)};
            \node (this) at (0.35,0.3) {\raisebox{5pt}{this work$\nearrow$}};
            \node[circle, minimum size=4pt, inner sep=0pt, fill=black, label=below left:\raisebox{-5pt}{Disser et al.~\cite{DBLP:journals/siamdm/DisserKMS17}} 
            $ \nearrow $] (disser) at (0,0.5) {};
            \node[circle,minimum size=0pt,inner sep=0pt,fill=black,label=above right:$\swarrow$\raisebox{5pt}{Kawase et al.~\cite{DBLP:journals/siamdm/KawaseSF19}}] (disser) at (0.7,0.0602) {};
            \draw[thick] (0,0.0602) -- (1,0.0602);
            \draw (1,0.03) to (1,-0.03) node[below] {$1$};
            \draw (0.5,0.03) to (0.5,-0.03) node[below] {$1/2$};
            \draw (0.03,0.5) to (-0.03,0.5) node[left] {$1/2$};
            \draw (0.03,1) to (-0.03,1) node[left] {$1$};
        \end{tikzpicture}
        \caption{
            \label{fig:robustness-factor}
            Robustness factors $\alpha$ of deterministic policies as a function of the curvature~$c$ achieved by this and previous work.}
    \end{center}
\end{figure}

A central technique for solving submodular maximization problems with a known or unknown knapsack capacity are greedy algorithms, and this work is no exception. Disser et al.~\cite{DBLP:journals/siamdm/DisserKMS17} compare the solution obtained by their policy with a greedy algorithm called \textsc{MGreedy} that either takes the greedy sequence or the first item that does not fit the knapsack anymore. As discussed by Kawase et al.~\cite{DBLP:journals/siamdm/KawaseSF19} this approach seems difficult to apply to submodular functions because the greedy sequence is different for different sizes of the knapsack due to the substitute effects among the items for the objective. They instead single out valuable items that provide a significant ratio of the optimum solution. This approach, however, comes at the expense of a much lower robustness factor. 

Actually, we can show that it is impossible to find deterministic policies that are always at least as good as \textsc{MGreedy} in the submodular setting. We circumvent this issue by analyzing a different kind of greedy algorithm that we call \textsc{AGreedy} and that seems to be more compatible with robust policies. In Section~\ref{sec:known}, we provide a tight analysis for this algorithm for the case of a known knapsack capacity and show that its approximation guarantee is the factor given in \eqref{eq:results}. As a byproduct of our analysis, we further obtain that the \textsc{MGreedy} algorithm also has the approximation guarantee as in \eqref{eq:results}. This generalizes a result of Wolsey~\cite{DBLP:journals/mor/Wolsey82} who analyzed this algorithm only for the general submodular case where $c=1$. In Section~\ref{sec:robust}, we then devise an adaptive policy that achieves a robustness that is at least as good as the approximation guarantee of \textsc{AGreedy}. 

\subsection{Further Related Work}

The problem of maximizing a submodular function under different constraints has a long history in the optimization literature. Nemhauser et al.~\cite{DBLP:journals/mp/NemhauserWF78} consider the problem of maximizing a monotonic submodular function under a cardinality constraint and show that the greedy algorithm that iteratively adds an item that maximizes the increase of the objective function achieves an approximation guarantee of  $(1-1/e) \approx 0.63$.
Nemhauser and Wolsey~\cite{DBLP:journals/mor/NemhauserW78} prove that this ratio is best possible for algorithms that have only access to $f$ via a value oracle that can only be queried a polynomial number of times.
Even for the special case that $f$ is given explicitly and corresponds to a maximum coverage function, there is no better approximation possible in polynomial time, unless $\mathsf{P} = \mathsf{NP}$, as shown by Feige~\cite{DBLP:journals/jacm/Feige98}.
Wolsey~\cite{DBLP:journals/mor/Wolsey82} considers the more general problem of maximizing a submodular function under a knapsack constraint and achieves an approximation guarantee of $1-e^{-x} \approx 0.35$ where $x$ is the unique root of the equation $e^x = 2-x$. Sviridenko~\cite{DBLP:journals/orl/Sviridenko04} shows that a combination of the greedy algorithm with a partial enumeration scheme achieves an approximation guarantee of $1-1/e$.
Another way to generalize the cardinality constrained case is to allow for arbitrary matroid constraints. For this case, the greedy algorithm yields an approximation guarantee of $1/2$, as shown by Fisher et al.~\cite{OTHER:fisherNW78}. Calinescu et al.~\cite{DBLP:journals/siamcomp/CalinescuCPV11} achieve a $1-1/e$ approximation by solving a fractional relaxation of the problem and combining it with a suitable rounding technique.

Conforti and Cornu\'ejols~\cite{DBLP:journals/dam/ConfortiC84} introduce the curvature~$c$ as a measure for the non-linearity of a (submodular) function and show that the greedy algorithm has an approximation guarantee of $(1- e^{-c})/c$ for the case of a cardinality constraint and $1/(c+1)$ for the case of a matroid constraint.
Vondr{\'a}k~\cite{OTHER:vondrak10} shows that the continuous greedy algorithm yields an approximation guarantee of $(1-e^{-c})/c$ for the case of a matroid constraint, and proves that no better approximation is possible in the value oracle model with a polynomial number of queries.
Sviridenko et al.~\cite{DBLP:journals/mor/SviridenkoVW17} give an algorithm with approximation guarantee of $1 - c/e - \mathcal{O}(\eps)$ for the problem with a matroid constraint. Yoshida~\cite{DBLP:journals/siamdm/Yoshida19} obtains the same approximation guarantee for the problem under a knapsack constraint. The algorithm relies on a continuous version of the greedy algorithm which seems to be incompatible with an unknown knapsack constraint since many items will be fractional during the course of the algorithm for smaller knapsack constraints. Also the distinction between small and large items which is elementary in the algorithm cannot be employed when the capacity is not known.

Packing problems with an unknown knapsack are studied by Megow and Mestre~\cite{DBLP:conf/innovations/MegowM13}. They consider the additive case and assume that the policy stops when an item does not fit the knapsack.
In this setting, no constant robustness factor is achievable on all instances and Megow and Mestre provide a polynomial time approximation scheme (PTAS) for the computation of an optimal policy. Navarra and Pinotti~\cite{DBLP:journals/tcs/NavarraP17} show how to construct a policy with robustness factor $1/2$ for instances that have the property that every item fits into the empty knapsack.
Disser et al.~\cite{DBLP:conf/waoa/DisserKW21} consider the optimization of a fractionally subadditive objective with the additional property that every singleton set has a value of $1$, and give a policy with robustness factor of $\approx 0.30$.
For the case of an unknown cardinality constraint, there is no difference between policies that continue or stop packing after an item does not fit. Bernstein et al.~\cite{OTHER:journals/mp/Bernsetin20} introduce a property on the objective function that they term accountability and that is more general than submodularity. They show that the optimal robustness factor for maximization of an accountable objective under an unknown cardinality constraint is between $1/(1+\phi) \approx 0.38$ where $\phi$ is the golden ratio and $0.46$.

\section{Preliminaries}

\subsection{Submodular Functions}
Let $\N$ be a finite set. A function $f \colon 2^\N \to \mathbb{R}_{\geq 0}$ is called \emph{monotone} if $ f(S) \leq f(T) $ for every $ S,T \in 2^{\N}$ with $S \subseteq T$, is called \emph{normalized} if $ f(\emptyset) = 0$, and is called \emph{submodular} if 
$f(S) + f(T) \geq f(S \cup T) + f(S \cap T)$ for all $S, T \in 2^{\N}$. 
For our purposes, it is without loss of generality to assume that $f(\{j\}) > 0$ for all $j \in N$ since an element $j$ with $f(\{j\}) = f(\emptyset)$, by submodularity, has no influence on the value of $f$ and, thus, can be removed from \eqref{eq:problem}.
As a shorthand, we use $f(u) = f(\{u\})$ for a single element $u \in N$ and $f(u \mid S) = f(S \cup \{u\}) - f(S) $ for the marginal increase of $u \in N$ with respect to a set $S \in 2^{\N}$. It is well-known that a function $f$ is submodular if and only if the following statement is satisfied:
\begin{align}
    f(u \mid S) \geq f(u \mid T) \qquad  \text{for all $S \subseteq T \subseteq \N$, $u \in \N \setminus T$}. \label{def:Subm1}
\end{align}
A submodular and monotone function further satisfies the following inequality, see, e.g., Nemhauser et al.~\cite{DBLP:journals/mp/NemhauserWF78} for a reference 
\begin{equation}
    \label{eq:nemhauser}
    f(T) \leq f(S) + \sum_{u \in T \setminus S} f(u \mid S) \qquad \text{for all $S \subseteq T \subseteq \N$}.
\end{equation}

\subsection{Curvature}
The curvature of a normalized, monotone and submodular function $f \colon 2^N \to \mathbb{R}_{\geq 0}$ is defined as
\begin{align*}
    c = 1 - \min_{j \in N} \frac{f(j \mid N \setminus \{j\})}{f(j)}.	
\end{align*}
The following lemma summarizes a couple of inequalities that are valid for submodular functions with a given curvature that are easy to show yet useful for the remainder of the paper. 	
\begin{lemma}
    \label{lem:curv-lemma}
    For a normalized, monotonic, and submodular function $ f \colon 2^\N \to \mathbb{R}_{\geq 0} $ with curvature $ c \in [0,1]$, the following inequalities are satisfied:
    \begin{enumerate}[(i)]
        \item \label{it:curv-lemma-1} $f(j \mid S) \geq (1 - c) \, f(j) $ for all $S \subset N$ and all $j \in N \setminus S$;
        \item \label{it:curv-lemma-2} $ f(S \cup T) \geq f(S) + (1-c) \sum_{i \in T} f(i)$  for all $S, T \subset \N$  with $S \cap T = \emptyset$.
    \end{enumerate}
\end{lemma}

\begin{proof}
    We first show \eqref{it:curv-lemma-1}.
    Let $S \subset N$ and $j \in N \setminus S$ be arbitrary. We calculate
    \begin{align*}
        1 - c =  \min_{i \in N} \frac{f(i \mid N \setminus \{i\})}{f(i)} \leq \frac{f(j \mid N \setminus \{j\})}{f(j)} \leq \frac{f(j \mid S)}{f(j)}
    \end{align*}
    where we first used the definition of curvature and at the end, we applied \eqref{def:Subm1}.
    
    To show \eqref{it:curv-lemma-2}, we successively apply \eqref{it:curv-lemma-1} on the elements in $T$.
\end{proof}

\subsection{Submodular Maximization under a Knapsack Constraint}
An instance of the submodular maximization problem under a \emph{known} knapsack constraint is given by a set of $n$ items $N = \{i_1,i_2,\dots,i_n\}$ where each item $i \in N$ has a size $s(i) \in \mathbb{R}_{> 0}$. We are further given a monotone, normalized and submodular function $f \colon 2^N \to \mathbb{R}_{\geq 0}$ that assigns a value $f(S)$ to every subset $S \subseteq N$ of items, and a capacity $\capacity \in \mathbb{R}_{>0}$.
For a subset $T \subseteq N$, we write $s(S) = \sum_{i \in S} s(i)$.
A solution to the problem is a set of items $S \subseteq N$. A solution $S$ is called \emph{feasible} if $s(S) \leq \capacity$, and called \emph{optimal} if $f(S) \geq f(T)$ for every feasible solution~$T$.

An instance of the submodular maximization problem under an \emph{unknown} knapsack constraint is as above except that we do not know the capacity $\capacity \in \mathbb{R}_{>0}$, i.e., we are again given a set of items $N$, their sizes $s(i)$, $i \in N$ and the submodular function $f$. A solution to this problem  is a policy $\Pi$ that governs the order in which items are added to the solution.

\section{Submodular Knapsack Problem with Known Capacity}
\label{sec:known}

\begin{figure}[t]
    \begin{minipage}{0.48\textwidth}
        \begin{algorithm}[H]
            \caption{Modified Greedy Algorithm \textsc{MGreedy}}
            \begin{algorithmic}
                \State $G_0 \gets \emptyset; j \gets 1$
                \State $U \gets \{i \in N \mid s(i) \leq \capacity\}$
                \While{$ U \neq \emptyset $}
                \State $ i_j \gets \arg\max_{i \in U} \Bigl\{\frac{f(i \,|\, G_{j-1})}{s(i)}\Bigr\}$
                \If{$ s(G_{j-1} \cup \{i_j\}) \leq \capacity $}
                \State $ G_j \gets G_{j-1} \cup \{i_j\} $
                \State $ U \gets U \setminus \{i_j\} $
                \State $j \gets j+1$
                \Else
                \State \textbf{break}
                \EndIf
                \EndWhile
                \State $k \gets j-1$
                \If{$ U = \emptyset $}
                \State \Return $ G_k $
                \Else
                \If{$ f(G_k) \geq f(i_{k+1}) $}
                \State \Return $ G_k $
                \Else
                \State \Return $ \{i_{k+1}\} $ 
                \EndIf
                \EndIf
            \end{algorithmic}
        \end{algorithm}
    \end{minipage}
    \hfill
    \begin{minipage}{0.48\textwidth}
        \begin{algorithm}[H]
            \caption{Alternative Greedy Algorithm \textsc{AGreedy}}
            \begin{algorithmic}
                \State $ G_0 \gets \emptyset; j \gets 1$
                \State $U \gets \{i \in N \mid s(i) \leq \capacity\}$
                \While{$ U \neq \emptyset $}
                \State $ i_j \gets \arg\max_{i \in U} \Bigl\{\frac{f(i \,|\, G_{j-1})}{s(i)}\Bigr\} $
                \If{$ s(G_{j-1} \cup \{i_j\}) \leq \capacity $}
                \State $ G_j \gets G_{j-1} \cup \{i_j\} $
                \State $ U \gets U \setminus \{i_j\} $
                \State $j \gets j+1$
                \Else
                \State \textbf{break}
                \EndIf
                \EndWhile
                \State $k \gets j-1$
                \If{$ U = \emptyset $}
                \State \Return $G_k$ 
                \Else
                \If{$ f(G_k) \geq f(i_{k+1} \mid G_k) $}
                \State \Return $G_k$ 
                \Else
                \State \Return $ \{i_{k+1}\} $
                \EndIf
                \EndIf
            \end{algorithmic}
        \end{algorithm}
    \end{minipage}
    \caption{Greedy algorithms for maximizing a submodular function $f$ over a knapsack constraint.}
\end{figure}

In this section, we analyze the approximation guarantee for two natural greedy algorithms that, for the sake of a better distinction, we call \emph{modified greedy algorithm} (\textsc{MGreedy}) and \emph{alternative greedy algorithm} (\textsc{AGreedy}).
The modified greedy algorithm was proposed and analyzed by Wolsey~\cite{DBLP:journals/mor/Wolsey82} where he shows that it has an approximation ratio of $1-e^{-x} \approx 0.35$ where $x$ is the unique root of the equation $e^x = 2-x$.
To the best of our knowledge, there is no better analysis of this algorithm for submodular functions with bounded curvature.
The alternative greedy algorithm is a slight variation of this algorithm that we need in order to derive policies for the optimization problem with unknown knapsack constraints in Section~\ref{sec:robust}.

Both algorithms first discard all items~$i$ that do not fit into an empty knapsack, i.e., where $s(i) > \capacity$. Then, the algorithms start in iteration~$0$ with an empty solution $G_0 = \emptyset$. In every iteration~$j = 1,2,\dots$, both algorithms choose an item
\begin{equation*}
    i_j \in \arg\max \biggl\{\frac{f(i \mid G_{j-1})}{s(i)} \;\bigg\vert\; i \in N \setminus G_{j-1} \biggr\}
\end{equation*}
that is not yet contained in the solution $G_{j-1}$ and maximizes the ratio of the increment of the objective function and the size of the item. If item~$i_j$ still fits the knapsack, i.e., $s(G_{j-1} \cup \{i_j\}) \leq \capacity$, then the item is added to the solution. Otherwise, the algorithm stops. Let $k$ be the last index such that item $i_k$ still fits into the knapsack.

Then, algorithm $\textsc{MGreedy}$ either returns the better of the solutions $G_k$ and $\{i_{k+1}\}$, i.e., it either returns the maximum prefix of the greedy sequence that still fits into the knapsack, or the first item that did not fit into the knapsack anymore. The alternative greedy also either returns $G_k$ or $\{i_{k+1}\}$ but the rule when to return one of the solutions slightly differs. The item $\{i_{k+1}\}$ is only returned if the marginal increase $f(i_{k+1} \mid G_k)$ of adding it to $G_k$ is larger than $f(G_k)$. In all other cases, $G_k$ is returned.

Since \textsc{MGreedy} always returns the better of the two solutions $G_k$ and $\{i_{k+1}\}$ while $\textsc{AGreedy}$ may also return $G_k$ even though $f(G_k) < f(i_{k+1})$, it is clear that the solution returned by \textsc{MGreedy} is always at least as good as the one returned by \textsc{AGreedy}.
Thus, the following result is immediate. 

\begin{proposition}
    \label{prop:greedy-algorithms}
    For every instance, $f(\MG) \geq f(\AG)$. 
\end{proposition}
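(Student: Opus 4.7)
The plan is to observe that the two algorithms are literally identical up to the final return statement: both initialize $G_0 = \emptyset$, both discard items with $s(i) > \gamma$, and both execute the same greedy while-loop that at each iteration picks the item $i_j$ maximizing the density $\bigl(f(G_{j-1}\cup\{i\}) - f(G_{j-1})\bigr)/s(i)$. Hence on a given instance they produce the exact same prefix $G_0 \subset G_1 \subset \dots \subset G_k$ and, if the loop terminated with $U \neq \emptyset$, the same overflow item $i_{k+1}$. So the whole proof reduces to comparing the two return statements on the common pair $(G_k, i_{k+1})$.

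I would then split into two cases. If $U = \emptyset$ at the end of the loop, both algorithms return $G_k$ and $f(\MG) = f(\AG)$, so the inequality is trivial. Otherwise, by inspection of the pseudocode, \textsc{MGreedy} returns the set attaining $\max\{f(G_k),\, f(\{i_{k+1}\})\}$, while \textsc{AGreedy} returns a set whose value is one of $f(G_k)$ or $f(\{i_{k+1}\})$. In either subcase, $f(\AG) \in \{f(G_k), f(\{i_{k+1}\})\} \leq \max\{f(G_k), f(\{i_{k+1}\})\} = f(\MG)$, which is exactly the claim.

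There is no real obstacle here, as the statement is essentially a bookkeeping observation about the return rules: \textsc{MGreedy} selects whichever of $G_k$ and $\{i_{k+1}\}$ has larger $f$-value, while \textsc{AGreedy} may sometimes return $G_k$ even when $\{i_{k+1}\}$ would be strictly better (specifically when $f(\{i_{k+1}\}) > f(G_k) \geq f(G_k \cup \{i_{k+1}\}) - f(G_k)$, which is possible by submodularity since $f(G_k\cup\{i_{k+1}\}) - f(G_k) \leq f(\{i_{k+1}\})$). The only thing one might want to state explicitly, for completeness, is that the argmax tie-breaking in line 4 can be assumed consistent across the two algorithms, so that the greedy sequences truly coincide; this is a harmless convention since the claim only compares objective values.
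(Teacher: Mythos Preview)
Your proposal is correct and matches the paper's reasoning exactly: the paper also observes (in the paragraph immediately preceding the proposition) that both algorithms produce the same pair $(G_k,\{i_{k+1}\})$, that \textsc{MGreedy} returns the better of the two while \textsc{AGreedy} may return $G_k$ even when $f(G_k) < f(\{i_{k+1}\})$, and declares the result ``immediate'' without further argument. Your write-up is simply a more explicit version of the same observation.
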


Despite this fact, we are still interested in analyzing \textsc{AGreedy} for two reasons. First, it turns out that \textsc{AGreedy} is better suited in order to design robust packing policies for the problem with an unknown knapsack capacity. Second, it will turn out, that in the worst case, the approximation guarantees that we obtain for \textsc{MGreedy} and \textsc{AGreedy} are actually the same.

In the following, we fix an instance of the submodular maximization problem under a knapsack constraint with known capacity. We assume that the items $N = \{i_1,i_2,\dots,i_n\}$ are ordered in the order as they would be considered by the greedy algorithms and we call this order the \emph{greedy order} of $N$. We also set $s_j = s(i_j)$ for all $j \in \{1,\dots,n\}$ and further, we let $k$ be the maximal prefix of this ordering that still fits into the knapsack, i.e., $\smash{k = \max \bigl\{j \in \{1,\dots,n\}\mid \sum_{i=1}^j s_i \leq \capacity \bigr\}}$. We let $\OPT$ denote the set of items in an optimal solution and we let $\MG$ and $\AG$ denote the set of items returned by \textsc{MGreedy} and \textsc{AGreedy}, respectively. Further, for $j \in \{0,\dots,n\}$, we let $\smash{G_j = \bigcup_{l=1,\dots,j} \{i_l\}}$ denote the first $j$ items of the greedy order.

Before we start with the analysis of the greedy algorithms, we want to give some intuition about the cases where the curvature lies in $(0,1)$ versus the case where $c = 1$. Wolsey's proof for \textsc{MGreedy} in the latter case consists of three steps: 
(i) bound the marginal increase $f(i_j \mid G_{j-1})$ for each iteration $j \in \{1,\dotsc,k+1\}$;
(ii) use the first $k$ bounds from step~(i) inductively, to derive a lower bound for the value of $G_k$;
(iii) combine the bound from step~(ii) with the bound for the marginal increase of the item in iteration $k+1$ from step~(i), to obtain a bound for $\max\{f(G_k),f(i_{k+1})\}$.

The first step is where we can improve the analysis for $c \in (0,1)$. Actually, the bound does not only hold for $ f(\OPT) - f(G_{j-1}) $, but also for $ f(\OPT \cup G_{j-1}) - f(G_{j-1}) $ and here we can apply part~(\ref{it:curv-lemma-2}) of \Cref{lem:curv-lemma} to derive the stronger bound $ f(\OPT) - f(G_{j-1}) + \sum_{i \in G_{j-1} \setminus \OPT} f(i) $. 
Note that the bound is only stronger if we packed items that are not in $\OPT$ in previous iterations. Intuitively, one would assume that packing items from an optimal solution cannot be harmful to the approximation guarantee and in the end it also turns out that this is the case. Nevertheless, we have to keep track of the iterations where we added items from $\OPT$ throughout all subsequent iterations in the proof. In order to do this, we introduce some more notation. 
For iteration $j \in \{0,\dotsc,k\}$, we let $ Q_j = \{\ell \in \{1,\dotsc,j\} \mid i_\ell \in \OPT \} $ be the index set of the first $j$ items of the greedy order that are in $\OPT$ and we define
\begin{equation*}
    \delta_j = \frac{f(\OPT)-\sum_{\ell \in Q_j} f(i_\ell \mid G_{\ell-1})}{\capacity - s(\OPT \cap G_j)}.
\end{equation*}
Note that $ i_j \notin \OPT$ implies $\delta_j = \delta_{j-1}$ and therefore, $\delta_j \neq \delta_{j-1}$ implies $ i_j \in \OPT$. In order to avoid division by zero, we exclude the trivial case $ G_k = \OPT $ for the rest of the chapter such that $ \capacity - s(\OPT \cap G_j) > 0 $ for all $ j \in \{1,\dots,k\} $. 
Actually, we are interested in the iterations $\ell \in \{1,\dotsc,k\}$ where $i_\ell \in \OPT$ and $\delta_\ell < \delta_{\ell-1}$. In these cases, we derive an additional term that we shorthand by $\omega_{\ell,\ell}$ and, for a later iteration $ j \in \{\ell+1,\dotsc,k\}$, we let $\omega_{\ell,j}$ denote the remains of that term from iteration $\ell$. 
Formally, for $\ell \in \{1,\dotsc,k\}$ and $j \in \{\ell,\dotsc,k\}$, we define
\begin{equation*}
    \omega_{\ell,j} = 
    \begin{cases}
        \Biggl(\displaystyle\prod_{m=\ell+1}^{j} \biggl(1-\frac{c \, s_m}{\capacity}\biggr)\Biggr) (\capacity - s(G_j)) (\delta_{\ell-1} - \delta_\ell)  &\text{ if } \delta_\ell < \delta_{\ell-1} , \\
        0  &\text{ otherwise},
    \end{cases}
\end{equation*}
and additionally, we set $\omega_{k+1,k+1} = 0 $.

Now we can start with the proof. We are following the afore-mentioned steps (i), (ii) and (iii) in \Cref{lem:greedy-increase}, \Cref{lem:greedy-sum} and \Cref{theo:greedy-approximation}, respectively.
Thus, in \Cref{lem:greedy-increase}, we bound the marginal increase of the greedy solution in each iteration from below.

\begin{lemma}
    \label{lem:greedy-increase}
    For each $ j \in \{1,\dotsc,k+1\} $, we have
    \begin{equation*}
        f(i_j \mid G_{j-1}) \geq \frac{c \, s_j}{\capacity} \Bigl(f(\OPT) - f(G_{j-1})\Bigr) + (1-c) s_j \delta_{j-1} + \omega_{j,j}.
    \end{equation*}
\end{lemma}

\begin{proof}
    Let $j \in \{1,\dotsc,k+1\}$ be arbitrary. We first show that the statement always holds for $\omega_{j,j} = 0$. We use part~\eqref{it:curv-lemma-2} of \Cref{lem:curv-lemma} with $ S = \OPT $ and $ T = G_{j-1} \setminus \OPT $. Subtracting $ f(G_{j-1}) $ from both sides, yields
    \begin{equation}
        \label{eq:lem-greedy-increase}
        f(\OPT) + (1-c) \sum_{i \in G_{j-1} \setminus \OPT} f(i) - f(G_{j-1}) \leq f(\OPT \cup G_{j-1}) - f(G_{j-1}). 
    \end{equation}
    We can bound $f(\OPT \cup G_{j-1}) - f(G_{j-1})$ on the right hand side by \eqref{eq:nemhauser} and by the definition of the greedy sequence. We obtain
    \begin{align*}
    \begin{split}
        f(\OPT \cup G_{j-1}) - f(G_{j-1}) &\leq \sum_{i \in \OPT \setminus G_{j-1}} f(i \mid G_{j-1}) \\
        & = \sum_{i \in \OPT \setminus G_{j-1}} s(i) \, \frac{f(i \mid G_{j-1})}{s(i)} \\
        & \leq \Biggl(\sum_{i \in \OPT \setminus G_{j-1}} s(i)\Biggr) \frac{f(i_j \mid G_{j-1})}{s_j} \\
        & = s(\OPT \setminus G_{j-1}) \frac{f(i_j \mid G_{j-1})}{s_j}.
    \end{split}
    \end{align*}
    Moreover, we can bound the left hand side of \eqref{eq:lem-greedy-increase}. For $j \in \{1,\dotsc,k\}$, we let $\overline{Q}_j = \{1,\dotsc,j\} \setminus Q_j$ be the complement of the index set $Q_{j}$. We have 
    \begin{equation*}
        \sum_{i \in G_{j-1} \setminus \OPT} f(i) = \sum_{\ell \in \overline{Q}_{j-1}} f(i_\ell) \geq \sum_{\ell \in \overline{Q}_{j-1}} f(i_\ell \mid G_{\ell-1})
    \end{equation*}
    by submodularity of $f$ and together with $ f(G_{j-1}) = \sum_{\ell \in [j-1]} f(i_\ell \mid G_{\ell-1}) $, we get
    \begin{multline*}
        f(\OPT) + (1-c) \sum_{i \in G_{j-1} \setminus \OPT} f(i) - f(G_{j-1}) \\
        \begin{aligned}
        &\geq f(\OPT) + (1-c) \sum_{\ell \in \overline{Q}_{j-1}} f(i_\ell \mid G_{\ell-1}) - \sum_{\ell = 1}^{j-1} f(i_\ell \mid G_{\ell-1}) \\
        &= f(\OPT) -(1-c) \sum_{\ell \in Q_{j-1}} f(i_\ell \mid G_{\ell-1}) - c \sum_{\ell = 1}^{j-1}f(i_\ell \mid G_{\ell-1}) \\
        &= c \Bigl( f(\OPT) - f(G_{j-1})\Bigr)  + (1-c) \Bigl( f(\OPT) - \sum_{\ell \in Q_{j-1}} f(i_\ell \mid G_{\ell-1})\Bigr).
        \end{aligned}
    \end{multline*}
    Combining both bounds with \eqref{eq:lem-greedy-increase} yields
    \begin{equation*}
        s(\OPT \setminus G_{j-1}) \frac{f(i_j \mid G_{j-1})}{s_j} \geq c \Bigl( f(\OPT) - f(G_{j-1})\Bigr)  + (1-c) \Bigl( f(\OPT) - \sum_{\ell \in Q_{j-1}} f(i_\ell \mid G_{\ell-1})\Bigr).
    \end{equation*}
    Multiplication with $ \frac{s_j}{s(\OPT \setminus G_{j-1})}$ and since $ s(\OPT \setminus G_{j-1}) \leq \capacity - s(\OPT \cap G_{j-1}) \leq \capacity $, we get
    \begin{align*}
        f(i_j \mid G_{j-1}) \geq \frac{c\,s_j}{\capacity} \Bigl( f(\OPT) - f(G_{j-1})\Bigr) + \frac{(1-c)s_j}{\capacity - s(\OPT \cap G_{j-1})} \Bigl( f(\OPT) - \sum_{\ell \in Q_{j-1}} f(i_\ell \mid G_{\ell-1})\Bigr),
    \end{align*}
    which, by the definition of $\delta_{j-1}$, completes the proof for $ \omega_{j,j} = 0 $. 
    
    Now, we consider the cases where $\omega_{j,j} \neq 0$, i.e., we consider $ j \leq k $ and $ \delta_j < \delta_{j-1} $. Recall that we have $ i_j \in \OPT $ under these assumptions. We can simply use $f(i_j \mid G_{j-1}) = s_j \delta_{j-1} + (f(i_j \mid G_{j-1}) - s_j \delta_{j-1})$ to derive the statement of the lemma. We have
    \begin{align*}
        s_j \delta_{j-1} &= c \, s_j \delta_{j-1} + (1-c) s_j \delta_{j-1} \\
        &= c\,s_j \frac{f(\OPT)-\sum_{\ell \in Q_{j-1}} f(i_\ell \mid G_{\ell-1})}{\capacity - s(\OPT \cap G_{j-1})} + (1-c)s_j \delta_{j-1} \\
        &\geq c\,s_j \frac{f(\OPT)-\sum_{\ell=1}^{j-1} f(i_\ell \mid G_{\ell-1})}{\capacity} + (1-c)s_j \delta_{j-1} \\
        &= \frac{c\,s_j}{\capacity} \Bigl(f(\OPT) - f(G_{j-1})\Bigr) + (1-c)s_j \delta_{j-1}.
    \end{align*}
    Thus, it remains to show that $ f(i_j \mid G_{j-1}) - s_j \delta_{j-1} \geq \omega_{j,j}$. By adding and subtracting $f(\OPT) - \sum_{\ell \in Q_{j-1}} f(i_\ell \mid G_{\ell-1})$, we get
    \begin{multline*}
        f(i_j \mid G_{j-1}) - s_j \delta_{j-1} \\ 
        = \Bigl(f(\OPT) - \sum_{\ell \in Q_{j-1}} f(i_\ell \mid G_{\ell-1})\Bigr) - s_j \delta_{j-1} - \Bigl(f(\OPT) -\sum_{\ell \in Q_{j}} f(i_\ell \mid G_{\ell-1})\Bigr),
    \end{multline*}
    since $i_j \in \OPT$ implies $j \in Q_j$. Using the definitions of $d_{j-1}$ and $d_{j}$, we can conclude that
    \begin{align*}
        f(i_j \mid G_{j-1}) - s_j \delta_{j-1} &= (\capacity - s(\OPT \cap G_{j-1})) \delta_{j-1} - s_j \delta_{j-1} - (\capacity - s(\OPT \cap G_j)) \delta_j \\
        &= (\capacity - s(\OPT \cap G_j)) (\delta_{j-1} - \delta_j) \\
        &\geq (\capacity - s(G_j)) (\delta_{j-1} - \delta_j) = \omega_{j,j},
    \end{align*}
    where we used again that $i_j \in \OPT$ and the inequality holds since $ \delta_j < \delta_{j-1} $.
\end{proof}

Before we sum up the marginal increases, we state a lemma that covers the behaviour of the $\omega$ terms within this summation. In order to sum over the $\omega$ terms that are nonzero, we define, for $j \in \{1,\dotsc,k\}$, the index set $D_j = \{\ell \in \{1,\dotsc,j\} \mid \delta_\ell < \delta_{\ell-1}\}$. Additionally, we use the notation $(x)^+ = \max\{x, 0\}$ for the positive part of a real number $x$.

\begin{lemma}
    \label{lem:omega-inequality}
    For each $ j \in \{2,\dotsc,k\} $, we have
    \begin{equation*}
        \biggl(1 - \frac{c\, s_j}{\capacity}\biggr) \sum_{\ell=1}^{j-1} \omega_{\ell,j-1} \geq (1-c) s_j (\delta_0 - \delta_{j-1})^+ + \sum_{\ell=1}^{j-1} \omega_{\ell,j}
    \end{equation*}
\end{lemma}

\begin{proof}
    Let $ j \in \{2,\dotsc,k\} $ be arbitrary. By the definitions of $\omega_{\ell,j-1}$ and $\omega_{\ell,j}$ for $\ell \in \{1,\dotsc,j-1\}$, we have
    \begin{align*}
        \biggl(1 - \frac{c\, s_j}{\capacity}\biggr) \sum_{\ell = 1}^{j-1} \omega_{\ell,j-1} &= \sum_{\ell \in D_{j-1}} \Biggl(\prod_{m=\ell+1}^{j} \biggl(1-\frac{c s_m}{\capacity}\biggr)\Biggr) (\capacity - s(G_{j-1}) (\delta_{\ell-1} - \delta_\ell) \\
        &= \sum_{\ell \in D_{j-1}} \omega_{\ell,j} + \sum_{\ell \in D_{j-1}} \Biggl(\prod_{m=\ell+1}^{j} \biggl(1-\frac{c s_m}{\capacity}\biggr)\Biggr) s_j (\delta_{\ell-1} - \delta_\ell).
    \end{align*}
    In order to simplify the product, we can apply a Weierstrass product inequality which states that for real numbers $ x_j \in [0,1], j \in \{1,\dotsc,n\} $, it holds that $\prod_{m=1}^n (1-x_m) \geq 1-\sum_{m=1}^n x_m$. Therefore, we have
    \begin{equation*}
        \prod_{m=\ell+1}^{j} \biggl(1-\frac{c \, s_m}{\capacity}\biggr) \geq 1 - \sum_{m=\ell+1}^{j} \frac{c \, s_m}{\capacity} = 1 - c \, \frac{\sum_{m=\ell+1}^{j} s_m}{\capacity} \geq 1 - c,
    \end{equation*}
    by the definition of $k$ and since $j \leq k$. Finally, we have 
    \begin{equation*}
        \sum_{\ell \in D_{j-1}} (\delta_{\ell-1}-\delta_{\ell}) \geq \max\Biggl\{\sum_{\ell=1}^{j-1} (\delta_{\ell-1}-\delta_\ell), 0\Biggr\} = (\delta_0 - \delta_{j-1})^+,
    \end{equation*}
    which completes the proof.
\end{proof}

The following lemma bounds the value of every prefix of the greedy sequence $f(G_j)$ in terms of $f(\OPT)$. For the proof, we use inductive arguments together with \Cref{lem:greedy-increase} and \Cref{lem:omega-inequality}.

\begin{lemma}
    \label{lem:greedy-sum}
    For each $ j \in \{1,\dotsc,k\} $, we have
    \begin{equation*}
        f(G_j) \geq \frac{1}{c} \Biggl( 1- \prod_{\ell=1}^{j} \biggl(1-\frac{c\, s_\ell}{\capacity}\biggr) \Biggr) f(\OPT) + \sum_{\ell=1}^{j} \omega_{\ell,j}.
    \end{equation*}
\end{lemma}

\begin{proof}
    We show the result by induction over $j$. For the base case $ j=1 $, we need to show that $ f(G_1) \geq \frac{s_1}{\capacity} f(\OPT) +  \omega_{1,1}$, which follows directly from \Cref{lem:greedy-increase} with $j=1$.
    
    Assume that the statement of the lemma holds up to $j-1$ and consider the statement for $j$. We have
    \begin{align*}
        f(G_j) &= f(G_{j-1}) + f(i_j \mid G_{j-1}) \\
        & \geq f(G_{j-1}) + \frac{c\, s_j}{\capacity} \Bigl(f(\OPT) - f(G_{j-1}) \Bigr) + (1-c)s_j \delta_{j-1} + \omega_{j,j} \\
        & = \biggl(1 - \frac{c\, s_j}{\capacity}\biggr)f(G_{j-1}) + \frac{c\, s_j}{\capacity} f(\OPT) + (1-c)s_j \delta_{j-1} + \omega_{j,j},
    \end{align*}
    where we used \Cref{lem:greedy-increase} for the inequality. Applying the induction hypothesis yields
    \begin{multline*}
        \biggl(1 - \frac{c\, s_j}{\capacity}\biggr)f(G_{j-1}) \\
        \begin{aligned}
        &\geq \biggl(1 - \frac{c\, s_j}{\capacity}\biggr) \Biggr( \frac{1}{c} \Biggl( 1- \prod_{\ell=1}^{j-1} \biggl(1-\frac{c\, s_\ell}{\capacity}\biggr) \Biggr) f(\OPT) + \sum_{\ell=1}^{j-1} \omega_{\ell,j-1} \Biggl) \\
        &= \frac{1}{c} \Biggl( \biggl(1 - \frac{c\, s_j}{\capacity}\biggr) - \prod_{\ell=1}^{j} \biggl(1-\frac{c\, s_\ell}{\capacity}\biggr) \Biggr) f(\OPT) + \biggl(1 - \frac{c\, s_j}{\capacity}\biggr) \sum_{\ell=1}^{j-1} \omega_{\ell,j-1} \\
        &= \frac{1}{c} \Biggl( 1 - \prod_{\ell=1}^{j} \biggl(1-\frac{c\, s_\ell}{\capacity}\biggr) \Biggr) f(\OPT) - \frac{s_j}{\capacity} f(\OPT) + \biggl(1 - \frac{c\, s_j}{\capacity}\biggr) \sum_{\ell=1}^{j-1} \omega_{\ell,j-1} \\
        &\geq \frac{1}{c} \Biggl( 1 - \prod_{\ell=1}^{j} \biggl(1-\frac{c\, s_\ell}{\capacity}\biggr) \Biggr) f(\OPT) - \frac{s_j}{\capacity} f(\OPT) + (1-c) s_j (\delta_0 - \delta_{j-1})^+ + \sum_{\ell = 1}^{j-1} \omega_{\ell,j},
        \end{aligned}
    \end{multline*}
    where we used \Cref{lem:omega-inequality} for the last inequality. Putting everything together gives
    \begin{multline*}
        f(G_j) \geq \frac{1}{c} \Biggl( 1 - \prod_{\ell=1}^{j} \biggl(1-\frac{c\, s_\ell}{\capacity}\biggr) \Biggr) f(\OPT) + \sum_{\ell=1}^j \omega_{\ell,j} \\
        -(1-c) \frac{s_j}{\capacity} f(\OPT) + (1-c)s_j \delta_{j-1} + (1-c) s_j (\delta_0 - \delta_{j-1})^+.
    \end{multline*}
    Since $\delta_0 = f(\OPT)/\capacity$, the latter part can be written as
    \begin{align*}
        (1-c) s_j (\delta_0 - \delta_{j-1})^+ - (1-c) s_j (\delta_0 - \delta_{j-1}),
    \end{align*}
    which is non-negative and thus, concludes the induction step and the proof of the lemma.
\end{proof}

In the following lemma, we simplify the result of \Cref{lem:greedy-sum} for $j = k$.

\begin{lemma}
    \label{lem:greedy-k-simplified}
    We have
    \begin{equation*}
        f(G_k) \geq \frac{1}{c} \Biggl(1- \exp\biggl(-c \, \frac{s(G_k)}{\capacity} \biggr) \Biggr) f(\OPT) + \biggl(1-c\,\frac{s(G_k)}{\capacity}\biggr)(\capacity-s(G_k)) (\delta_0 - \delta_{k})^+.
    \end{equation*}
\end{lemma}

\begin{proof}
    We get from \Cref{lem:greedy-sum} for $ j=k $ that
    \begin{equation}
        \label{eq:lem-greedy-k-simplified}
        f(G_k) \geq \frac{1}{c} \Biggl( 1 - \prod_{\ell=1}^{k} \biggl(1-\frac{c\, s_\ell}{\capacity}\biggr) \Biggr) f(\OPT) + \sum_{\ell=1}^{k} \omega_{\ell,k}.
    \end{equation}
    In order to obtain the statement of the lemma, we first bound the term in front of $f(\OPT)$ and afterwards, we bound the sum. We start by using an inequality that states that the geometric mean is always smaller or equal to the arithmetic mean for non-negative values $x_\ell$, $\ell \in \{1,\dotsc,k\}$, i.e., 
    \begin{equation*}
        \Biggl(\prod_{\ell=1}^k x_\ell\Biggr)^{\!1/k} \leq \frac{1}{k} \sum_{\ell=1}^k x_\ell.
    \end{equation*}
    We get
    \begin{align*}
        \frac{1}{c} \Biggl( 1- \prod_{\ell=1}^{k} \biggl(1-\frac{c\, s_\ell}{\capacity}\biggr) \Biggr) &\geq \frac{1}{c} \left[ 1- \Biggl(\frac{1}{k} \sum_{\ell=1}^{k} \biggl(1-\frac{c\, s_\ell}{\capacity}\biggr) \Biggr)^{\!k} \right] \\
        &= \frac{1}{c} \left[ 1- \Biggl(1 - \frac{c\, s(G_k)}{k \, \capacity}  \Biggr)^{\!k} \right] f(\OPT) \\
        & \geq \frac{1}{c} \Biggl(1- \exp\biggl(-c \, \frac{s(G_k)}{\capacity} \biggr) \Biggr) f(\OPT),
    \end{align*}
    where we first used that $ \sum_{\ell=1}^k s_\ell = s(G_k) $ and at the end, we applied the inequality $ (1+x/k)^k \leq e^x $ which holds for all $ k \geq 1 $ and $ x \in \mathbb{R} $. 
    
    Now, we consider the sum from the back of \eqref{eq:lem-greedy-k-simplified}. By the definitions, we have
    \begin{align*}
        \sum_{\ell=1}^{k} \omega_{\ell,k} &= \sum_{\ell \in D_k} \Biggl(\prod_{m=\ell+1}^{k} \biggl(1-\frac{c \, s_m}{\capacity}\biggr)\Biggr) (\capacity - s(G_k)) (\delta_{\ell-1} - \delta_\ell). 
    \end{align*}
    We first bound the product such that it becomes independent of $\ell$. By the Weierstrass product inequality (see proof of \Cref{lem:omega-inequality}), we get for each $\ell \in \{1,\dotsc,k\}$ that
    \begin{equation*}
        \prod_{m=\ell+1}^{k} \biggl(1-\frac{c \, s_m}{\capacity}\biggr) \geq 1 - \sum_{m=\ell+1}^{k}  \frac{c \, s_m}{\capacity} \geq 1-c\,\frac{s(G_k)}{\capacity},
    \end{equation*}
    Finally, we have
    \begin{equation*}
        \sum_{\ell \in D_k} (\delta_{\ell-1} - \delta_\ell) \geq \max \Biggl\{ \sum_{\ell=1}^{k} (\delta_{\ell-1} - \delta_\ell), 0 \Biggr\} = (\delta_0 - \delta_{k})^+,
    \end{equation*}
    which completes the proof.
\end{proof}

Now, we have all preliminary results together to obtain the approximation guarantee of \textsc{AGreedy} with the dependency on the curvature $c$. In order to obtain the guarantee, we combine \Cref{lem:greedy-k-simplified} with the statement of \Cref{lem:greedy-increase} for $j = k+1$ that we have not used yet.

\begin{theorem}
    \label{theo:greedy-approximation}
    Let $ c \in (0,1] $. For \textsc{AGreedy} we have
    \begin{equation*}
        f(\AG) \geq \frac{1}{c} \bigl(1- e^{-c x} \bigr) f(\OPT),
    \end{equation*}
    where $x$ is the unique root of $ \frac{1}{c} \bigl(1- e^{-c z} \bigr) = \frac{1-z}{1+c(1-z)} $ for $z \in [0,1]. $
\end{theorem}

\begin{proof}
    We let $ z = s(G_k)/\capacity $ be the fraction of the capacity that is used by the first $k$ items and we use the shorthand $ y = (\delta_0 - \delta_{k})^+$ for this proof. With this notation  \Cref{lem:greedy-k-simplified} yields
    \begin{equation}
        f(G_k) \geq \frac{1}{c} (1- e^{-cz} ) f(\OPT) + (1-cz)(1-z)\capacity y. \label{eq:theo-greedy-bound-1}
    \end{equation}
    Additionally, we have by \Cref{lem:greedy-increase} with $ j = k+1 $ that
    \begin{align*}
        f(i_{k+1} \mid G_k) &\geq \frac{c\, s_{k+1}}{\capacity} \Bigl(f(\OPT) - f(G_{k}) \Bigr) + (1-c)s_{k+1} \delta_{k} \\
        &> c (1-z) (f(\OPT) - f(G_{k})) + (1-c)(1-z) \capacity \delta_k,
    \end{align*}
    where we used that $ s_{k+1} > \capacity (1-z)$, since $ \capacity < s(G_k) + s_{k+1} = z\capacity + s_{k+1} $. By rearranging the terms and with $\delta_0 = f(\OPT)/\capacity$, we get
    \begin{align}
    \begin{split}
        f(i_{k+1} \mid G_k) + c (1-z) f(G_k) &> c (1-z) f(\OPT) + (1-c)(1-z) \capacity (\delta_k+\delta_0-\delta_0) \\
        &= (1-z) f(\OPT) - (1-c)(1-z) \capacity (\delta_0 - \delta_k) \\
        &\geq (1-z) f(\OPT) - (1-c)(1-z) \capacity y,
    \end{split}
    \label{eq:theo-greedy-bound-2}
    \end{align}
    where we used that $ (\delta_0 - \delta_k) \leq (\delta_0 - \delta_k)^+ = y $ for the last inequality. Recall that \textsc{AGreedy} returns $G_k$ if $ f(G_k) \geq f(i_{k+1} \mid G_k)$ and otherwise it returns $i_{k+1}$. Thus, we have that $ f(\AG) \geq \max\{f(G_k),f(i_{k+1} \mid G_k)\} $ and we get from \eqref{eq:theo-greedy-bound-2} that
    \begin{align*}
        (1 + c (1-z)) f(\AG) > (1-z) f(\OPT) - (1-c)(1-z) \capacity y.
    \end{align*}
    In summary, we have
    \begin{align*}
        f(\AG) &\geq \frac{1}{c} (1- e^{-c z}) f(\OPT) + (1-c z)(1-z)\capacity y =: g(y,z),\\
        f(\AG) &\geq \frac{1-z}{1 + c (1-z)} f(\OPT) - \frac{(1-c)(1-z)}{1 + c (1-z)} \capacity y =: h(y,z),
    \end{align*}
    where $y \geq 0$ and $z \in (0,1]$. Note that the first bound implies the statement of the lemma if $y \geq f(\OPT)/2\capacity$ since
    \begin{equation*}
        f(\AG) \geq \frac{1}{c} (1- e^{-c z}) f(\OPT) + \frac{1}{2} (1-c z)(1-z) f(\OPT) \geq \frac{1}{2} f(\OPT),
    \end{equation*}
    where the last inequality holds for every $c \in (0,1]$, since the left hand side is monotonically increasing for $z \in (0,1]$ and $ f(\OPT)/2 $ is obtained for $z=0$. The monotonicity holds, since we have for the derivative with respect to $z$ that
    \begin{align*}
        \Bigr(e^{-cz} - \frac{1}{2} (1-cz) - \frac{1}{2} c(1-z)\Bigl) f(\OPT) &\geq \Bigr(1-cz - \frac{1}{2} (1-cz) - \frac{1}{2} c(1-z)\Bigl) f(\OPT) \\
        &= \frac{1}{2} (1-c) f(\OPT) \geq 0,
    \end{align*}
    where we used that $e^x \geq 1+x$ for $x \in \mathbb{R}$.
    
    As a consequence, we want to find the value of
    \begin{equation}
        \label{eq:theo-greedy-approx-optimization}
        \min_{y \in [0,f(\OPT)/2\capacity),\,z \in (0,1]} \max\{g(y,z),h(y,z)\}.
    \end{equation}
    In the following, we first show that for a fixed value of $y$ the maximum of \eqref{eq:theo-greedy-approx-optimization} is always attained at the intersection of both functions and defines a value $z(y)$ with $g(y,z(y))=h(y,z(y))$. We avoid to calculate $z(y)$ explicitly, instead we obtain the derivative of $z$ via the implicit function theorem. Then, we can show that the value of $g$ is strictly increasing along the curve $(y,z(y))$ and thus, the minimum of \eqref{eq:theo-greedy-approx-optimization} is attained for $y=0$ which corresponds to the statement of the lemma. In \Cref{fig:graphs-of-g-and-h} we show plots of the functions $g$ and $h$ for different values of $c$ and $y$ that indicate the described behaviour of the curve along the intersections.
      
    Therefore, consider a fixed $y \in [0,f(\OPT)/2\capacity)$. We have that $g$ is strictly increasing and $h$ is strictly decreasing for $ z \in (0,1] $, since
    \begin{align}
    \begin{split}
        \label{eq:theo-greedy-g-increasing}
        \frac{\partial}{\partial z} g(y,z) &= e^{-cz} f(\OPT) - ((1-cz) + c(1-z)) \capacity y \\
        &\geq (1-cz) f(\OPT) - ((1-cz) + c(1-z)) \capacity y \\
        &> (1-cz) f(\OPT) - \frac{1}{2}\Bigl((1-cz) + c(1-z)\Bigr) f(\OPT)\\
        &= \frac{1}{2} (1-c) f(\OPT) \\
        &\geq 0,
    \end{split}
    \end{align}
    where we used that $e^x \geq 1+x$ for $x \in \mathbb{R}$ and that $y < f(\OPT)/2\capacity$. Furthermore,
    \begin{align}
    \begin{split}
        \label{eq:theo-greedy-h-decreasing}
        \frac{\partial}{\partial z} h(y,z) &= -\frac{1}{(1+c(1-z))^2} f(\OPT) + \frac{1-c}{(1+c(1-z))^2} \capacity y \\
        &< -\frac{1}{(1+c(1-z))^2} f(\OPT) + \frac{1-c}{2 (1+c(1-z))^2} f(\OPT) \\
        &= - \frac{1+c}{2 (1+c(1-z))^2} f(\OPT) \leq 0,
    \end{split}
    \end{align}
    where we used again that $y < f(\OPT)/2\capacity$. We conclude that both functions have a unique intersection $z(y) \in (0,1)$ for each $y \in [0,f(\OPT)/2\capacity)$, since
    \begin{align*}
        h(y,0) &= \frac{1}{1+c} f(\OPT) - \frac{1-c}{1+c} \capacity y > \frac{2}{1+c} \capacity y - \frac{1-c}{1+c} \capacity y = \capacity y = g(y,0), \\
        g(y,1) &= \frac{1}{c} \Bigl(1- e^{-c} \Bigr) f(\OPT) > 0 = h(y,1).
    \end{align*}
    Note that $g-h$ is continuously differentiable in the domains we consider for $y$ and $z$ and that the partial derivative with respect to $z$ is non-zero by \eqref{eq:theo-greedy-g-increasing} and \eqref{eq:theo-greedy-h-decreasing}. Therefore, we can apply the implicit function theorem on $g-h$, stating that $z(y)$ is a differentiable function with
    \begin{equation*}
        \frac{\mathrm{d}}{\mathrm{d}y}z(y) = - \Biggl( \frac{\partial}{\partial y} \Bigl(g(y,z(y)) - h(y,z(y))\Bigr) \Biggr) \Biggl( \frac{\partial}{\partial z} \Bigl(g(y,z(y)) - h(y,z(y)) \Bigr) \Biggr)^{-1}.
    \end{equation*}
    We want to show that
    \begin{equation*}
        \frac{\mathrm{d}}{\mathrm{d} y}g(y,z(y)) = \frac{\partial}{\partial y}g(y,z(y)) + \frac{\partial}{\partial z}g(y,z(y)) \frac{\mathrm{d}}{\mathrm{d}y}z(y) \geq 0,
    \end{equation*}
    implying the statement of the theorem. Rearranging the inequality after plugging in the derivative of $z(y)$ yields
    \begin{equation*}
        \frac{\frac{\partial}{\partial z}g(y,z(y)) - \frac{\partial}{\partial z}h(y,z(y))}{\frac{\partial}{\partial z}g(y,z(y))} \geq \frac{\frac{\partial}{\partial y}g(y,z(y)) - \frac{\partial}{\partial y}h(y,z(y))}{\frac{\partial}{\partial y}g(y,z(y))},
    \end{equation*}
    where we used that
    \begin{align}
    \begin{split}
        \label{eq:theo-greedy-y-derivatives}
        \frac{\partial}{\partial y} g(y,z) &= (1-c z)(1-z)\capacity \geq 0,\\
        \frac{\partial}{\partial y} h(y,z) &= - \frac{(1-c)(1-z)}{1 + c (1-z)} \capacity \leq 0.
    \end{split}
    \end{align}
    Further simplifying the inequality leads to
    \begin{equation*}
        \frac{\frac{\partial}{\partial z}h(y,z(y))}{\frac{\partial}{\partial z}g(y,z(y))} \leq \frac{\frac{\partial}{\partial y}h(y,z(y))}{\frac{\partial}{\partial y}g(y,z(y))}.
    \end{equation*}
    Using the inequalities of the partial derivatives from \eqref{eq:theo-greedy-g-increasing}, \eqref{eq:theo-greedy-h-decreasing} allows us to conclude, since
    \begin{align*}
        \dfrac{\frac{\partial}{\partial z}h(y,z(y))}{\frac{\partial}{\partial z}g(y,z(y))} &< - \frac{1+c}{(1-c)(1+c(1-z(y)))^2} \\
        & \leq - \frac{1}{(1-c)(1+c(1-z(y)))} \\
        & \leq - \frac{1-c}{(1-cz(y))(1+c(1-z(y)))} = \dfrac{\frac{\partial}{\partial y}h(y,z(y))}{\frac{\partial}{\partial y}g(y,z(y))},
    \end{align*}
    where we used for the second inequality that $ 1+c(1-z) \leq 1+c $ and for the last inequality that $ (1-c)^2 = 1 - c(2-c) \leq (1-cz) $.
\end{proof}

\begin{figure}[t]
\centering
\begin{tikzpicture}[scale=6]
    \clip (-0.95,-0.22) rectangle (1.45,0.85);
    \draw [-latex,very thick](0,-0.02)--(0,0.78);
    \draw [-latex,very thick](-0.02,0)--(0.56,0) node[right]{$z$};
    \foreach \x in {0.1, 0.2, 0.3, 0.4, 0.5}{
        \draw (\x,0.01 ) -- (\x,-0.01) node[anchor=north] {$\x$};
        }
    \foreach \x in {0.1, 0.2, 0.3, 0.4, 0.5, 0.6, 0.7}{
        \draw (0.01,\x) -- (-0.01,\x) node[anchor=east] {$\x$};
        }
    \draw[thick,mycolor1] plot[smooth] coordinates {(-0.1,0.475) (0,0.5) (0.1, 0.525) (0.2, 0.55) (0.3, 0.576) (0.4, 0.6025) (0.5, 0.63) (0.6,0.658)};
    \draw[thick,mycolor2] plot[smooth] coordinates {(-0.1,0.186) (0,0.25) (0.1, 0.311) (0.2, 0.37) (0.3, 0.427) (0.4, 0.483) (0.5, 0.536) (0.6,0.588)};
    \draw[thick,mycolor3] plot[smooth] coordinates {(-0.1,-0.102) (0,0) (0.1, 0.0975) (0.2, 0.190) (0.3, 0.279) (0.4, 0.363) (0.5, 0.442) (0.6,0.518)};
    
    \draw[thick,mycolor1] plot[smooth] coordinates {(-0.1,0.532) (0,0.5) (0.1, 0.4655) (0.2, 0.4286) (0.3, 0.3889) (0.4, 0.346) (0.5, 0.3) (0.6,0.25)};
    \draw[thick,mycolor2] plot[smooth] coordinates {(-0.1,0.621) (0,0.5833) (0.1, 0.5431) (0.2, 0.5) (0.3, 0.4537) (0.4, 0.404) (0.5, 0.35) (0.6,0.2917)};
    \draw[thick,mycolor3] plot[smooth] coordinates {(-0.1,0.71) (0,0.667) (0.1,0.621) (0.2, 0.5714) (0.3, 0.5185) (0.4, 0.4615) (0.5, 0.4) (0.6, 0.333)};
    \node[] at (0.3,-0.15) {$c=0.5$};
    \draw[mycolor1, fill] (0,0.5) circle[radius=.015cm];
    \draw[mycolor2, fill] (0.325,0.4415) circle[radius=.015cm];
    \draw[mycolor3, fill] (0.470,0.4189) circle[radius=.015cm];

\begin{scope}[shift={(0.8,0)}]
    \draw [-latex,very thick](0,-0.02)--(0,0.78);
    \draw [-latex,very thick](-0.02,0)--(0.56,0) node[right]{$z$};
    \foreach \x in {0.1, 0.2, 0.3, 0.4, 0.5}{
        \draw (\x,0.01 ) -- (\x,-0.01) node[anchor=north] {$\x$};
        }
    \foreach \x in {0.1, 0.2, 0.3, 0.4, 0.5, 0.6, 0.7}{
        \draw (0.01,\x) -- (-0.01,\x) node[anchor=east] {$\x$};
        }
    \draw[thick,mycolor1] plot[smooth] coordinates {(-0.1,0.4998) (0,0.5) (0.1, 0.5002) (0.2, 0.5013) (0.3, 0.5042) (0.4, 0.51) (0.5, 0.518) (0.6,0.5312)};
    \draw[thick,mycolor2] plot[smooth] coordinates {(-0.1,0.197) (0,0.25) (0.1, 0.298) (0.2, 0.341) (0.3, 0.382) (0.4, 0.42) (0.5, 0.456) (0.6,0.4912)};
    \draw[thick,mycolor3] plot[smooth] coordinates {(-0.1,-0.105) (0,0) (0.1, 0.0952) (0.2, 0.181) (0.3, 0.2592) (0.4, 0.3297) (0.5, 0.3935) (0.6,0.4512)};
    
    \draw[thick,blue!40!black] plot[smooth] coordinates {(-0.1,0.524) (0,0.5) (0.1, 0.474) (0.2, 0.444) (0.3, 0.412) (0.4, 0.375) (0.5, 0.333) (0.6,0.2857)};
    \node[] at (0.3,-0.15) {$c=1$};
    \draw[mycolor1, fill] (0,0.5) circle[radius=.015cm];
    \draw[mycolor2, fill] (0.3406,0.3973) circle[radius=.015cm];
    \draw[mycolor3, fill] (0.4428,0.3578) circle[radius=.015cm];
\end{scope}

\begin{scope}[shift={(-0.8,0)}]
    \draw [-latex,very thick](0,-0.02)--(0,0.78);
    \draw [-latex,very thick](-0.02,0)--(0.56,0) node[right]{$z$};
    \foreach \x in {0.1, 0.2, 0.3, 0.4, 0.5}{
        \draw (\x,0.01 ) -- (\x,-0.01) node[anchor=north] {$\x$};
        }
    \foreach \x in {0.1, 0.2, 0.3, 0.4, 0.5, 0.6, 0.7}{
        \draw (0.01,\x) -- (-0.01,\x) node[anchor=east] {$\x$};
        }
    \draw[thick,mycolor1] plot[smooth] coordinates {(-0.1,0.45) (0,0.5) (0.1, 0.55) (0.2, 0.6) (0.3, 0.65) (0.4, 0.7) (0.5, 0.75) (0.6,0.8)};
    \draw[thick,mycolor2] plot[smooth] coordinates {(-0.1,0.175) (0,0.25) (0.1, 0.325) (0.2, 0.4) (0.3, 0.475) (0.4, 0.55) (0.5, 0.625) (0.6,0.7)};
    \draw[thick,mycolor3] plot[smooth] coordinates {(-0.1,-0.1) (0,0) (0.1, 0.1) (0.2, 0.2) (0.3, 0.3) (0.4, 0.4) (0.5, 0.5) (0.6,0.6)};
    
    \draw[thick,mycolor1] plot[smooth] coordinates {(-0.1,0.55) (0,0.5) (0.1, 0.45) (0.2, 0.4) (0.3, 0.35) (0.4, 0.3) (0.5, 0.25) (0.6,0.2)};
    \draw[thick,mycolor2] plot[smooth] coordinates {(-0.1,0.825) (0,0.75) (0.1, 0.675) (0.2, 0.6) (0.3, 0.525) (0.4, 0.45) (0.5, 0.375) (0.6,0.3)};
    \draw[thick,mycolor3] plot[smooth] coordinates {(-0.1,1.1) (0,1) (0.1,0.9) (0.2, 0.8) (0.3, 0.7) (0.4, 0.6) (0.5, 0.5) (0.6, 0.4)};
    \node[] at (0.3,-0.15) {$c \rightarrow 0$};
    \draw[mycolor1, fill] (0,0.5) circle[radius=.015cm];
    \draw[mycolor2, fill] (0.333,0.5) circle[radius=.015cm];
    \draw[mycolor3, fill] (0.5,0.5) circle[radius=.015cm];
\end{scope}

    \draw[black] (1,0.62) rectangle (1.4,0.84);
    \draw[mycolor3] (1.04,0.80) -- (1.1,0.80);
    \node[anchor=west] at (1.1,0.8) {$y=0$};
    \draw[mycolor2] (1.04,0.73) -- (1.1,0.73);
    \node[anchor=west] at (1.1,0.73) {$y=0.25$};
    \draw[mycolor1] (1.04,0.66) -- (1.1,0.66);
    \node[anchor=west] at (1.1,0.66) {$y=0.5$};
    
\end{tikzpicture}
\caption{Visualization of the functions $g$ and $h$ with $f(\OPT)=\capacity=1$ and for different curvatures $c \in \{0,0.5,1\}$. We fixed values of $y$ and graphs with the same color belong to the same value of $y$. The increasing graphs are function $g$ and the decreasing graphs are function $h$; for $c=1$ the three graphs of $h$ are identical. The horizontal axis $z$ represents the total size of $G_k$ and the vertical axis the approximation guarantee. We can see that minimum of the maximum of $g$ and $h$ (the dots at the intersection) get lower if we decrease the value of $y$.}
\label{fig:graphs-of-g-and-h}
\end{figure}

The result of \Cref{theo:greedy-approximation} coincides with the known approximation guarantee of \textsc{MGreedy} for the cases $ c = 0 $ and $ c = 1 $. We have for the limit $ c \rightarrow 0 $ that 
\begin{equation*}
    \lim_{c \rightarrow 0} \frac{1}{c} \Bigl(1- e^{-c z} \Bigr) = z,
\end{equation*}
and the equation for the root simplifies to $ z = 1-z $ which implies the known approximation guarantee of $ 1/2 $ for the additive case; see, e.g., the textbook by Korte and Vygen~\cite{KorteV18}. For the other case of $ c = 1 $ the equation simplifies to $ 1-e^{-z} = \frac{1-z}{2-z} $, which implies the approximation guarantee shown by Wolsey~\cite{DBLP:journals/mor/Wolsey82}.

In the remainder of the chapter, we give an instance of the knapsack problem for a submodular set function with curvature $ c \in (0,1] $ where \textsc{MGreedy} attains the approximation guarantee of \Cref{theo:greedy-approximation}. Together with \Cref{prop:greedy-algorithms} this proves that the approximation guarantee is tight for \textsc{AGreedy} and \textsc{MGreedy}.

Consider a capacity $ \capacity = 1 $ and a set of items $ N = \{i_0,i_1,\dotsc,i_k,i_{k+1}\} $ with $ k \in \mathbb{N} $. We assign each item $ i \in N $ a value $ v(i) \in \mathbb{R}_{> 0} $ and for $ S \subseteq N $ the value of the submodular set function $ f $ is given by  
\begin{equation*}
    f(S) = \begin{cases}
        \sum_{i \in S} v(i) & \text{ if } i_0 \notin S, \\
        v(i_0) + (1-c) \sum_{i \in S \setminus \{i_0\}} v(i) & \text{ if } i_0 \in S \text{ and } i_{k+1} \notin S, \\
        v(i_0) + v(i_{k+1}) + (1-c) \sum_{i \in S \setminus \{i_0,i_{k+1}\}} v(i) & \text{ if } i_0 \in S \text{ and } i_{k+1} \in S, \\
    \end{cases}
\end{equation*}
where $ c \in (0,1] $. Using \eqref{def:Subm1} as the definition of submodularity makes it easy to verify that $ f $ is submodular and it is also easy to see that $ f $ is normalized and monotone. In terms of curvature, we have for every item $ i \in N \setminus \{i_0\} $ that $f(i \mid N \setminus \{i\}) \geq (1-c) v(i) = (1-c) f(i)$. Moreover, for $i_0 \in N$, we have to choose the values such that
\begin{equation}
    \label{eq:curvature-inequality}
    f(i_0 \mid N \setminus \{i_0\}) = v(i_0) - c \sum_{i \in N \setminus \{i_0,i_{k+1}\}} v(i) \geq (1-c) v(i_0) = (1-c) f(i_0).
\end{equation}
For the values, we let $ z $ be a real number in $ (0,1) $ that we specify later and we let $ s = z/k $. For $ j \in \{1,\dotsc,k\} $, we set
\begin{align*}
    v(i_0) &= 1, & v(i_j) &= s(1-c\,s)^{j-1}, & v(i_{k+1}) &= (1-z+\varepsilon)\biggl(1-c\,\sum_{\ell=1}^{k} v(i_\ell)\biggr), \\
    s(i_0) &= 1, & s(i_j) &= s, & s(i_{k+1}) &= 1-z+\varepsilon,
\end{align*}
with a sufficiently small $ \varepsilon > 0 $. Note that the inequality in \eqref{eq:curvature-inequality} can be written as $ v(i_0) \geq \sum_{\ell=1}^k v(i_\ell) $, and is fulfilled since
\begin{equation*}
    \sum_{\ell=1}^k v(i_\ell) = s \sum_{\ell=1}^k (1-cs)^{\ell-1} \leq s \, k = z \leq 1 = v(i_0),
\end{equation*}
and thus, $f$ has curvature $c$.

\textsc{MGreedy} builds the following solution: in the first iteration the algorithm chooses the item maximizing $ v(i)/s(i) $. This results in a tie between $ i_0 $ and $ i_1 $ and we decide, as an adversary, that the algorithm chooses $ i_1 $. Assume for a later iterations $ j \in \{2,\dotsc,k+1\} $, that the greedy solution up to iteration $j-1$ is $ G_{j-1} = \{i_1,\dotsc,i_{j-1}\} $. The algorithm chooses the item maximizing $ f(i \mid G_{j-1})/s(i) $ from the remaining items and this results for each $ j \in \{2,\dotsc,k+1\} $ in a tie between $ i_0 $ and $ i_j $, since we have
\begin{align*}
    \frac{f(i_0 \mid G_{j-1})}{s(i_0)} = v(i_0)-c \sum_{\ell=1}^{j-1} v(i_\ell) &= 1-c\,s \sum_{\ell=1}^{j-1} (1-c\,s)^{\ell-1} \\
    &= 1-c\,s \, \frac{1-(1-c\,s)^{j-1}}{c\,s} = (1-c\,s)^{j-1},
\end{align*}
and, for $j \in \{2,\dotsc,k\}$, we have
\begin{align*}
    \frac{f(i_j \mid G_{j-1})}{s(i_j)} = \frac{v(i_j)}{s(i_j)} = (1-c\,s)^{j-1},
\end{align*}
and lastly, for $j = k+1$, we have
\begin{align*}
    \frac{f(i_{k+1} \mid G_{k})}{s(i_{k+1})} = \frac{v(i_{k+1})}{s(i_{k+1})} = 1-c\,\sum_{\ell=1}^{k} v(i_\ell) &= 1-c\,s \sum_{\ell=1}^{k} (1-c\,s)^{\ell-1} \\
    &= 1-c\,s \, \frac{1-(1-c\,s)^{k}}{c\,s} = (1-c\,s)^{k}.
\end{align*}
We decide again that the algorithm chooses item $ i_{j} $ in iteration $ j \in \{2,\dotsc,k+1\} $. With this inductive argumentation, we know that \textsc{MGreedy} returns either $ G_k = \{i_1,\dotsc,i_k\} $ or $\{i_{k+1}\}$, because $ i_{k+1} $ is the first item that exceeds the capacity. On the other hand, the optimal solution would be $\OPT = \{i_0\}$ with a value of $1$. The value of $ G_k $ is 
\begin{align*}
    f(G_k) = \sum_{m=1}^{k} s(1-c\,s)^{m-1} = s \frac{1-(1-c\,s)^k}{c\,s} &= \frac{1}{c} \Bigl(1-(1-c\,s)^k\Bigr)\\ 
    &= \frac{1}{c} \biggl(1-\Bigl(1-\frac{c\,z}{k}\Bigr)^k\biggr) \\
    &\xrightarrow{k\rightarrow\infty} \frac{1}{c} \Bigl(1-e^{-cz}\Bigr),
\end{align*}
and the value of $ i_{k+1} $ is 
\begin{equation*}
    f(i_{k+1}) = v(i_{k+1}) = (1-z+\varepsilon)\biggl(1-c\sum_{m=1}^{k} v(i_m)\biggr) = (1-z+\varepsilon)(1-c \, f(G_k)).
\end{equation*}
For a sufficiently small $\varepsilon > 0 $, we can choose $ z \in (0,1)$ such that $ f(G_k) = f(i_{k+1}) $ and this yields 
\begin{equation*}
    f(i_{k+1}) = \frac{1-z+\varepsilon}{1+c(1-z+\varepsilon)}.
\end{equation*}
Thus, we have $f(\MG) = \max\{f(G_k), f(i_{k+1})\} = \frac{1}{c} \bigl(1- e^{-c x} \bigr)$ 
where $x$ is the unique root of the equation $ \frac{1}{c} \bigl(1- e^{-c z} \bigr) = \frac{1-z+\varepsilon}{1+c(1-z+\varepsilon)} $ for $ z \in (0,1) $. With $ \varepsilon \rightarrow 0 $, we get arbitrary close to the result of \Cref{theo:greedy-approximation}.

\section{Submodular Knapsack Problem with Unknown Capacity}
\label{sec:robust}

In this section we introduce an algorithm that generates a policy that is always at least as good as \textsc{AGreedy} even though it does not know the capacity of the knapsack. For that purpose we introduce indispensable items in the first part of this section. They are defined similar to swap items defined by Disser et al.~\cite{DBLP:journals/siamdm/DisserKMS17}, which they used to achieve their $ 1/2 $-optimal policy for an additive objective function.

As discussed by Kawase et al.~\cite{DBLP:journals/siamdm/KawaseSF19}, one major challenge when going from the case of an additive objective function to a submodular objective function is that the greedy order of items depends on the capacity of the knapsack. When an item is packed into the knapsack then other items that have a large overlap in terms of the objective with the packed item decrease in density. On the other hand, for another capacity where the first item is not packed since it does not fit they remain attractive. This issue makes it difficult to compare the outcome of a packing policy that does not know the capacity with the outcome of the \textsc{MGreedy} algorithm as it was done in Disser et al.~\cite{DBLP:journals/siamdm/DisserKMS17}.

Kawase et al.~\cite{DBLP:journals/siamdm/KawaseSF19} overcome this issue by introducing the concept of a \emph{single-valuable item} $ i $ with the property $ f(\{i\}) \geq 2f(\OPT(s(i)/2)) $, i.e., Kawase et al. do not compare items with the greedy solution at all and instead compare the value of an item directly with the optimal solution $\OPT$. In their policy, the most valuable single-valuable item that fits in the knapsack is inserted first. Afterwards, they try to insert the rest of the items in their greedy order. This deterministic policy achieves a robustness factor of $2(1-1/e)/21 \approx 0.06$.

To motivate the usage of the alternative greedy algorithm \textsc{AGreedy}, we show in the following example that it is not possible to find policies that are always as good as \textsc{MGreedy}. Consider items $ i_1,i_2,i_3,$ and $i_4 $ with sizes
\begin{equation*}
    s(i_1) = 5, \quad s(i_2) = 5, \quad s(i_3) = 12, \quad s(i_4) = 10.
\end{equation*}
They are labeled in the order in that \textsc{MGreedy} would consider them. The submodular set function $ f $ is defined by the set coverage in \Cref{fig:impossibleMG}, where elements are represented by positive numbers and the value of a subset of items is given by the sum of all elements covered by those items. We show that no matter with which item the policy starts there is a capacity for that \textsc{MGreedy} returns a more valuable solution than the policy.

Assume the policy starts with either $ i_1 $ or $ i_2 $. For capacity $ \capacity = 11 $ the solution of the policy will be $ \{i_1,i_2\} $ with $ f(\{i_1,i_2\}) = 19 $, whereas \textsc{MGreedy} returns $ \{i_4\} $ with $ f(i_4) = 20 $. This is because $ i_3 $ gets discarded at the beginning of the algorithm and $ i_4 $ becomes the first item that exceeds the capacity.

If the policy starts with $ i_3 $, then for $ \capacity = 12 $ the solution of the policy is $ \{i_3\} $ with $ f(i_3) = 12 $. However, the output of \textsc{MGreedy} for this capacity is $ \{i_1,i_2\} $ with $ f(\{i_1,i_2\}) = 19 $.

Now, assume the policy start with $ i_4 $ and the capacity is $ \capacity = 22 $. Depending on the second item of the policy the solution is either $ \{i_4,i_1,i_2\} $ with a value of $ 27 $ or $ \{i_4,i_3\} $ with a value of $ 28 $, whereas \textsc{MGreedy} returns $ \{i_1,i_2,i_3\} $ with a value of $ 31 $.

We overcome the capacity-dependency of the greedy order and the impossibility result for \textsc{MGreedy} by defining the concept of \emph{indispensable items}. These are items that the alternative greedy algorithm \textsc{AGreedy} returns instead of the greedy solution. It turns out that we can find a policy that is as good as \textsc{AGreedy} for every capacity. The difference between \textsc{MGreedy} and \textsc{AGreedy} in the previously shown example lies in the first case with $\capacity=11$. Here, \textsc{AGreedy} returns the greedy solution instead of $i_4$, since the marginal increase of the item with respect to $\{i_1,i_2\}$ is smaller than the value of $\{i_1,i_2\}$.

\begin{figure}
    \begin{center}
        \begin{tikzpicture}
            \draw[draw=black,thick, fill = lightgray] (0,0) -- (2,0) -- (2,2) -- (0,2) -- cycle;
            \draw[draw=black,thick] (1.3,0.1) -- (1.3,1.9) -- (2.7,1.9) -- (2.7,0.1) -- cycle;
            \draw[draw=black,thick] (-0.7,1.05) -- (0.7,1.05) -- (0.7,1.9) -- (-0.7,1.9) -- cycle;
            \draw[draw=black,thick] (-0.7,0.95) -- (0.7,0.95) -- (0.7,0.1) -- (-0.7,0.1) -- cycle;
            \node[] () at (-0.35,0.5) {$ 3 $};
            \node[] () at (-0.35,1.5) {$ 4 $};
            \node[] () at (0.35,0.5) {$ 4 $};
            \node[] () at (0.35,1.5) {$ 8 $};
            \node[] () at (1,1) {$ 4 $};
            \node[] () at (1.65,1) {$ 4 $};
            \node[] () at (2.35,1) {$ 8 $};
            \node[] () at (1,2.3) {$ i_4 $};
            \node[] () at (-1,1.5) {$ i_1 $};
            \node[] () at (3.3,1) {$ i_3 $};
            \node[] () at (-1,0.5) {$ i_2 $};
        \end{tikzpicture}
        \caption{
            \label{fig:impossibleMG}
            Coverage function $ f $ with four items $ i_1,\dotsc,i_4 $.}
    \end{center}
\end{figure}

\subsection{Indispensable Items}
\label{subsec:indis}

\begin{definition}
    An item $ i \in \N $ is called \emph{indispensable} if there exists a capacity $\capacity \in \mathbb{R}_{>0} $ for that \textsc{AGreedy} returns $\{i\} =\{i_{k+1}\}$ instead of the greedy solution $G_k$.
\end{definition}
For a fixed capacity $\capacity \in \mathbb{R}_{>0} $, we say that item $ i \in \N $ is indispensable for $\capacity $ if \textsc{AGreedy} returns $\{i\} = \{i_{k+1}\}$ instead of the greedy solution $ G_k $.

For ease of exposition, we assume in the following that there are no ties when an algorithm compares items by value, differences in value, or density. In practice this could be achieved by small perturbations of the values, or by using a lexicographic order that breaks ties in a systematic way. However, to avoid heavy notation, we assume that ties do not exist.

In the following, we will denote by $ U = \{i \in N \mid s(i) \leq \capacity\} $ the set of items considered by \textsc{AGreedy}. Since the capacity is unknown to us, we might have to deal with a larger set of items. Therefore let $ \theta \in \mathbb{R}_{>0} $ be an arbitrary threshold on the item sizes. We let $ N_{\theta} = \{i \in N \mid s(i) \leq \theta \} $ denote the subset of all items with sizes not larger than $ \theta $ and we define $ n_\theta = \left|N_\theta\right| $.

The following lemma contains important properties that will help us to determine indispensable items which are a key element of our robust policy.
 
\begin{lemma}	
    \label{lem:indispensable-items}
    Let $ \theta \in \mathbb{R}_{>0} $ and let $ i_1, i_2, \dotsc i_{n_\theta} $ be the items in $ N_{\theta} $ sorted by their greedy order. If there exists a natural number $k \geq 1 $, such that $ f(i_{k+1} \mid G_k) > f(G_k)$, the following properties hold:
    \begin{enumerate}[(i)]
        \item \label{it:lem-indispensible-1} $ s(i_{k+1}) > \sum_{j=1}^{k} s(i_j) $.
        \item \label{it:lem-indispensible-2} $ i_{k+1} $ is an indispensable item for capacity $s(i_{k+1})$
        \item \label{it:lem-indispensible-3} If it exists, let $ \tilde{\theta} $ be the smallest capacity larger than $ s(i_{k+1}) $, such that the first $ k+1 $ items in the greedy order of $ N_{\tilde{\theta}} $ are not the first $ k+1 $ items in the greedy order of $ N_{s(i_{k+1})} $. Then, the first item in the greedy order of $ N_{\tilde{\theta}} $ that is not identical to the item in the greedy order of $ N_{\theta} $ is either the first item in the greedy order of $ N_{\tilde{\theta}} $ or an indispensable item for capacity $ \tilde{\theta} $.
    \end{enumerate}
\end{lemma}

\begin{proof} 
    We start by showing property \eqref{it:lem-indispensible-1}. By the assumption of the lemma and by the submodularity of $f$, we have for all $ j \in \{1,\dotsc,k\} $ that
    \begin{align}
        \label{eq:L1}
        f(G_k) < f(i_{k+1} \mid G_k) \leq f(i_{k+1} \mid G_{j-1}).
    \end{align} 
    Additionally, by the definition of the greedy order, we have for each $ j \in \{1,2,\dotsc,k\}$ that
    \begin{equation}
        \label{eq:L2}
        \frac{f(i_j \mid G_{j-1})}{s(i_j)} \geq \frac{f(i_{k+1} \mid G_{j-1})}{s(i_{k+1})},
    \end{equation} 
    Therefore, we have
    \begin{align*}
        f(G_k) &= \sum_{j=1}^{k} f(i_j \mid G_{j-1}) 
        \geq \sum_{j=1}^{k} \frac{s(i_j)}{s(i_{k+1})} f(i_{k+1} \mid G_{j-1}),
    \end{align*}
    where we used \eqref{eq:L2} for the inequality. We obtain
    \begin{equation*}
        s(i_{k+1}) \geq \sum_{j=1}^{k} s(i_j) \frac{f(i_{k+1} \mid G_{j-1})}{f(G_k)} > \sum_{j=1}^{k} s(i_j)
    \end{equation*}
    by inequality~\eqref{eq:L1}.
    
    We continue with property \eqref{it:lem-indispensible-2}. Let $ U = \{i \in N \mid s(i) \leq s(i_{k+1})\} $ be the set of items considered by \textsc{AGreedy} for capacity $s(i_{k+1})$. By property \eqref{it:lem-indispensible-1}, we know that the items $i_1,\dotsc,i_{k+1}$ are contained in $U$ and therefore, they are considered by \textsc{AGreedy} in the same order for capacity $s(i_{k+1})$. Again by property \eqref{it:lem-indispensible-1}, we get that the first $k$ items don't exceed the capacity and thus, $i_{k+1}$ is the first item that exceeds the capacity. Since $ f(i_{k+1} \mid G_k) > f(G_k)$ holds by assumption, \textsc{AGreedy} returns item $i_{k+1}$ instead of the greedy solution which is what we needed to show.
    
    Finally, we show \eqref{it:lem-indispensible-3}.
    Let $ j_1,\dotsc,j_{k+1} $ be the first $ k+1 $ items in the greedy order of $ N_{\tilde{\theta}} $ and, for $ m \in \{1,\dotsc,k+1\} $, let $ H_m = \bigcup_{\ell=1,\dots,m} \{j_\ell\} $. Further, let $ \smash{\tilde{k}} $ be the largest index such that $ i_\ell = j_\ell $ for all $ \smash{\ell \in \{1,\dotsc,\tilde{k}\}} $. By the definitions of $\smash{\tilde{\theta}}$ and $\smash{\tilde{k}}$, we know that $ \smash{\tilde{k} \leq k} $, $ G_m = H_m $ for all $ m \in \{1,\dotsc,\tilde{k}\} $, and $ s(j_{\tilde{k}+1}) = \tilde{\theta} $.
    
    We proceed to prove that $ j_{\tilde{k}+1} $ is an indispensable item for $ \smash{\tilde{\theta}} $, if $ \smash{\tilde{k} \geq 1} $. This implies the statement, since $\smash{\tilde{k} = 0}$ implies that it is the first item of the greedy order of $N_{\tilde{\theta}}$.
    Since $ s(H_{\tilde{k}}) = s(G_{\tilde{k}}) < s(i_{k+1}) < s(j_{\tilde{k}+1}) $, we have that $ j_{\tilde{k}+1} $ is the first item in the greedy order of $ N_{\tilde{\theta}} $ that exceeds the capacity. Additionally, we have
    \begin{equation*}
        f(j_{\tilde{k}+1} \mid H_{\tilde{k}}) \geq f(i_{k+1} \mid H_{\tilde{k}}) \geq f(i_{k+1} \mid G_k) > f(G_k) \geq f(H_{\tilde{k}}).
    \end{equation*}
    The first inequality holds, since $ j_{\tilde{k}+1} $ is in front of $ i_{k+1} $ in the greedy order $ G_{\tilde{\theta}} $ and $ s(j_{\tilde{k}+1}) > s(i_{k+1}) $. Second and last inequality follow from submodularity, since $ H_{\tilde{k}} = G_{\tilde{k}} \subseteq G_k $. The third inequality holds by the assumption of the lemma. Therefore, $ j_{\tilde{k}+1} $ is returned by \textsc{AGreedy} instead of the greedy solution and thus, an indispensable item for capacity $ \smash{\tilde{\theta}} $ if $ \tilde{k} \geq 1 $.
\end{proof}	

\subsection{A Robust Policy}
\label{subsec:robust}

    \begin{algorithm}[tb]
    \caption{Construction of a starting policy}
    \label{alg:starting-policy}
    \begin{algorithmic}[1]
        \Procedure{Policy}{$ N $}
        \State $ G_0 \gets \emptyset; k \gets 0; n \gets |N| $
        \For{$ j \gets 1,\dotsc,n $}
        \State $ i_j \gets \arg\max_{i \in N} \Bigl\{\frac{f(i \,|\, G_{j-1})}{s(i)}\Bigr\} $
        \If{$ j \geq 2 \textbf{ and } f(i_j \mid G_{j-1}) > f(G_{j-1}) $}
        \State $k \gets j-1$
        \EndIf
        \State $ G_j \gets G_{j-1} \cup \{i_j\} $
        \State $ N \gets N \setminus \{i_j\} $
        \EndFor
        \State \Return $ \Pi \gets (i_{k+1},i_1,\dotsc,i_k,i_{k+2},\dotsc,i_n)$
        \EndProcedure
    \end{algorithmic}
\end{algorithm}

The general idea of the adaptive policy is to choose a reasonable start item based on \Cref{lem:indispensable-items}~(\ref{it:lem-indispensible-3}). By the statement it is convenient to start with an indispensable item as long as there are no larger indispensable items and the first item in the greedy order is not larger than the indispensable item. With that in mind, \Cref{alg:starting-policy} creates a first policy by building the greedy order of all items. Meanwhile, it looks for the last item $i_j$ with $ j \geq 2 $ in the greedy order that fulfills the condition $f(i_j \mid G_{j-1}) > f(G_{j-1})$. If such an item exists, it is swapped to the front of policy while all other items stay in their greedy order. In the following lemma, we show some properties of the first item in the policy provided by \Cref{alg:starting-policy}.

\begin{lemma}
    \label{lem:properties-starting-policy}
    Let $ \theta \in \mathbb{R}_{>0} $ and let $ i_1, i_2, \dotsc i_{n_\theta} $ be the items in $ N_{\theta} $ sorted by their greedy order. Moreover, let $\Pi$ be the policy returned by \Cref{alg:starting-policy} with $ N_{\theta} $ given as the input and we define $k$ such that $i_{k+1}$ is the first item of $\Pi$. Then, the following properties hold:
    \begin{enumerate}[(i)]
        \item \label{it:policy-lemma-1} If $k = 0$, then $i_{k+1}$ is larger than all indispensable items of $ N_{\theta} $.
        \item \label{it:policy-lemma-2} If $k \geq 1$, then $i_{k+1}$ is the largest indispensable item of $ N_{\theta} $.
    \end{enumerate}
\end{lemma}

\begin{proof}
    If $ k \geq 1 $, then $i_{k+1}$ fulfills $ f(i_{k+1} \mid G_{k}) > f(G_k) $. Therefore, by \Cref{lem:indispensable-items}~(\ref{it:lem-indispensible-2}), we know that $i_{k+1}$ is an indispensable item and, by \Cref{lem:indispensable-items}~(\ref{it:lem-indispensible-1}), we know that all items $ i_1,\dotsc,i_k $ are smaller than $i_{k+1}$. For $k\geq0$, assume there are larger indispensable items than $ i_{k+1} $. Let $ i_j $ with $ j > k+1 $ be the largest of those indispensable items. Then there has to be some capacity $ \omega $ for which $i_j$ is indispensable. Note that indispensable items naturally fulfill the assumptions of \Cref{lem:indispensable-items}. We will show that such a capacity $ \omega $ cannot exist.
    
    First of all, we have $ f(i_j \mid G_{j-1}) \leq f(G_{j-1}) $ since otherwise $ i_{k+1} $ would not be the first item in $ \Pi $. Therefore, $ i_j $ is not indispensable for capacity $ \theta $. For all capacities larger than $ \theta $ an item needs to change the greedy order in front of $ i_j $ before it can become indispensable. But then there is always an item in front of $ i_j $ that is larger than $ i_j $, contradicting \Cref{lem:indispensable-items}~(\ref{it:lem-indispensible-1}). 
    If $ \omega $ is smaller than $ \theta $ it is clear that the greedy order of $ N_{\omega} $ has to change in front of $ i_j $ for some capacity $ \tilde{\omega} $ with $ \omega < \tilde{\omega} < \theta $. But then, by \Cref{lem:indispensable-items}~(\ref{it:lem-indispensible-3}), there is either an indispensable item larger than $ i_j $, immediately contradicting our assumption that $ i_j $ is the largest, or there is an item at the beginning of the new greedy order that is larger than $ i_j $ and thus, larger than $ i_{k+1} $. If $ k=0 $ this is a contradiction since $i_{k+1}$ cannot be the first item in the greedy order of $ N_\theta $ anymore and if $ k \geq 1 $, then \Cref{lem:indispensable-items}~(\ref{it:lem-indispensible-1}) is violated for capacity $ \theta $ and therefore, $i_{k+1}$ does not fulfil $ f(i_{k+1} \mid G_{k}) > f(G_k) $, again a contradiction
\end{proof}

\begin{algorithm}[tb]
    \caption{Packing and adapting of the policy}
    \label{alg:adaptive-policy}
    \begin{algorithmic}[1]
        \State $ S \gets \emptyset $
        \While{$ S = \emptyset $}
        \State $ \Pi \gets \textsc{Policy}(N) $
        \If{$ \Pi(1) $ fits in the knapsack}
        \State $ S \gets S \cup \{ \Pi(1) \}; N \gets N \setminus \{ \Pi(1) \} $
        \Else
        \State $ \N \gets \{i \in \N \mid s(i) < s(\Pi(1))\} $
        \EndIf
        \EndWhile
        \State $j \gets 2$
        \While{$N \neq \emptyset$}
        \If{$ \Pi(j) $ fits in the knapsack}
        \State $ S \gets S \cup \{\Pi(j)\};  N \gets N \setminus \{ \Pi(j) \} $
        \State $j \gets j+1$
        \Else
        \State $ \N \gets \{i \in \N \mid s(i) < s(\Pi(j))\} $
        \State $ \Pi \gets \textsc{Update\_Policy}(N,S) $
        \State $j \gets 1$
        \EndIf
        \EndWhile
        \State \Return S
    \end{algorithmic}
\end{algorithm}

The adaptive policy in \Cref{alg:adaptive-policy} consists of two steps. In the first step the algorithm begins with all items in $ N $ and starts packing items in the order provided by \Cref{alg:starting-policy}. If the first item $i_{k+1}$ of the order can be added to the knapsack, i.e., $ s(i_{k+1}) \leq \capacity $, the algorithm continues to pack items in the predefined order. Otherwise it discards $i_{k+1}$ together with all items that are at least as large as $i_{k+1}$. Then, Algorithm~\ref{alg:starting-policy} provides a new order of the remaining items $ N_\theta $ with $ \theta = s(i_{k+1}) $ and the process is repeated until the first item of the order can be added to the knapsack.

In the second part, after the first item is added, the algorithm tries to add items in the present order. If an item is added to the knapsack the algorithm just continues with the next item of the order. Only if an item is not added, we have to adapt our policy. This is necessary, because $N_\theta$ can contain items with a size larger than the capacity and those items can change the greedy order compared to the greedy order considered by \textsc{AGreedy}. Therefore, \Cref{alg:policy-update} updates the greedy order of the remaining items based on the set if items $S$ that we have already packed into the knapsack.

\begin{algorithm}[tb]
    \caption{Update of the greedy order}
    \label{alg:policy-update}
    \begin{algorithmic}[1]
        \Procedure{Update\_Policy}{$ N,S $}
        \State $ G_0 \gets S;  n \gets |N| $
        \For{$ j \gets 1,\dotsc,n $}
        \State $ i_j \gets \arg\max_{i \in N} \Bigl\{\frac{f(i \,|\, G_{j-1})}{s(i)}\Bigr\} $
        \State $ G_j \gets G_{j-1} \cup \{i_j\} $
        \State $ N \gets N \setminus \{i_j\} $
        \EndFor
        \State \Return $ \Pi \gets (i_1,\dotsc,i_n)$
        \EndProcedure
    \end{algorithmic}
\end{algorithm}

Finally, we want to show that the packing obtained from the policy of \Cref{alg:adaptive-policy} is as good as \textsc{AGreedy} for all capacities. The prove is based on the following insight: if the packing started with an indispensable item $ i_{k+1}$ with $ k \geq 1 $ and one of the following items $i_1,\dotsc,i_k$ cannot be added, then \textsc{AGreedy} returned the same indispensable item. If all of these items can be added to the knapsack, we are back on packing items in their greedy order and \textsc{AGreedy} returns the greedy solution as we do. Should we start with the first item of the greedy order, i.e., $k=0$, we can show that \textsc{AGreedy} returns the greedy solution as well.

\begin{theorem}
    \label{theo:agreedy-vs-policy}
    For a capacity $ \capacity \in \mathbb{R}_{>0} $ let $ S(\capacity) $ be the output of Algorithm~\ref{alg:adaptive-policy} and let $ \AG(\capacity) $ be the output of \textsc{AGreedy}. Then, we have $f(S(\capacity)) \geq f(\AG(\capacity))$ for every capacity $ \capacity \in \mathbb{R}_{>0}$.
\end{theorem}

\begin{proof}
    Consider an arbitrary capacity $ \capacity \in \mathbb{R}_{>0} $. Let $N_{\theta}$ with $ \theta \geq \capacity $ be the set of items and let $\Pi$ be the policy at the time the first item of $\Pi$ is added to the solution $S$ in \Cref{alg:adaptive-policy}. Furthermore, let $i_1,\dotsc,i_{n_\theta}$ be the items in $N_{\theta}$ sorted by their greedy order and let $i_{k+1}, k \in \{0,\dotsc,n_\theta-1\}$ be the first item in the policy $\Pi$. By assumption we have $\capacity \geq s(i_{k+1})$. We distinguish the cases $k=0$ and $k \geq 1$.
    
    If $k \geq 1$, then the packing continues with the items $i_1,\dotsc,i_k$ and afterwards with the items $i_{k+1},\dotsc,i_{n_\theta}$. First of all, we assume that we cannot add all items $i_1,\dotsc,i_k$ to our knapsack, i.e., $s(\{i_1,\dotsc,i_{k}\}) + s(i_{k+1}) > \capacity $. By \Cref{lem:indispensable-items}~(\ref{it:lem-indispensible-1}) we get that $\capacity \geq s(i_{k+1}) > s(\{i_1,\dotsc,i_{k}\}) $ and therefore, we know that the greedy solution of \textsc{AGreedy} is $G_k = \{i_1,\dotsc,i_k\} $ and $ i_{k+1} $ is the first item that exceeds the capacity. Since $ i_{k+1} $ was swapped to the front of the policy in \Cref{alg:starting-policy}, we know that $ f(i_{k+1} \mid G_k) > f(G_k)$ and thus, \textsc{AGreedy} returns $ i_{k+1} $ instead of the greedy solution. We get $f(S(\capacity)) \geq f(i_{k+1}) = f(\AG(\capacity))$ as claimed.
    
    In case we have $k \geq 1$ and $ s(\{i_1,\dotsc,i_{k}\}) + s(i_{k+1}) \leq \capacity $, all items $ i_1,\dotsc,i_{k} $ are packed and together with $ i_{k+1} $ we have packed the first $ k+1 $ items of the greedy order. For $ k = 0 $ we are in the same situation after we packed the first item $ i_{k+1} $. The algorithm continues to pack items $ i_{k+2},\dotsc,i_{n_\theta} $. If we get to an item $ j $ with $ s(j) > \capacity $, it cannot be added and we update the greedy order of the remaining items by \Cref{alg:policy-update} such that we add items in the exact same order as they are added to the greedy solution of \textsc{AGreedy}. The first item $ j $ with $ s(j) \leq \capacity $ that cannot be added is also the first item that exceeds the capacity in \textsc{AGreedy}. If \textsc{AGreedy} returns the greedy solution $G$ we get $f(S(\capacity)) \geq f(G) = f(\AG(\capacity))$ as claimed. Otherwise, if $j$ is returned by \textsc{AGreedy}, $j$ is an indispensable item for capacity $\capacity$. But this cannot be, because $ s(j) \leq s(i_{k+1}) $ would contradict \Cref{lem:indispensable-items}~(\ref{it:lem-indispensible-1}) and $ s(j) > s(i_{k+1}) $ would contradict \Cref{lem:properties-starting-policy}~(\ref{it:policy-lemma-1}) if $ k=0 $ and \Cref{lem:properties-starting-policy}~(\ref{it:policy-lemma-2}) if $ k\geq 1 $.
\end{proof}

From \Cref{theo:greedy-approximation} and \Cref{theo:agreedy-vs-policy}, we obtain the main result of this paper.

\begin{theorem}
    There exists an adaptive policy that is $\alpha$-optimal for $ \alpha = \frac{1}{c} \bigl(1- e^{-c x} \bigr) $ where $x$ is the unique root in $[0,1]$ of the equation $ \frac{1}{c} \bigl(1- e^{-c z} \bigr) = \frac{1-z}{1+c(1-z)} $.
\end{theorem}

\bibliographystyle{abbrvnat}
\bibliography{Indis}
	
\end{document}